\typeout{Graphical Cake Cutting via Maximin Share}

\documentclass{article}
\pdfpagewidth=8.5in
\pdfpageheight=11in
\usepackage{ijcai21}

\usepackage{times}
\usepackage{soul}
\usepackage{url}
\usepackage[hidelinks]{hyperref}
\usepackage[utf8]{inputenc}
\usepackage[small]{caption}
\usepackage{graphicx}
\usepackage{amsmath}
\usepackage{amsthm}
\usepackage{booktabs}
\usepackage{algorithm}
\usepackage{algorithmic}
\urlstyle{same}

%\usepackage{natbib}

% natbib workaround
\newcommand{\citet}[1]{\citeauthor{#1}~\shortcite{#1}}
\newcommand{\citep}{\cite}

\usepackage{amssymb,cleveref,tikz,multirow}
\usepackage[shortlabels]{enumitem}

\newtheorem{theorem}{Theorem}[section]
\newtheorem{proposition}[theorem]{Proposition}
\newtheorem{lemma}[theorem]{Lemma}

\theoremstyle{definition}
\newtheorem{definition}[theorem]{Definition}

\newtheorem{open}[theorem]{Open Problem}

\newtheorem{claim}{Claim}

\newcommand{\distg}{\textsc{Dist}^G}
\newcommand{\pathg}{\textsc{Path}^G}
\newcommand{\med}{\textsc{Nearest}}
\newcommand{\mms}{\text{MMS}}
\newcommand{\mmsrank}{\textsc{MmsRank}}
\newcommand{\mmsrankadd}{\textsc{MmsRankAdd}}
\newcommand{\fvsnum}{\textsc{FvsNum}}

\pagestyle{plain}

\allowdisplaybreaks

\pdfinfo{
/TemplateVersion (IJCAI.2021.0)
}

\title{Graphical Cake Cutting via Maximin Share}

\author{
Edith Elkind$^1$\and
Erel Segal-Halevi$^2$\And
Warut Suksompong$^3$\\
\affiliations
$^1$Department of Computer Science, University of Oxford\\
$^2$Department of Computer Science, Ariel University\\
$^3$School of Computing, National University of Singapore\\
}

\begin{document}

\maketitle

\begin{abstract}
We study the recently introduced cake-cutting setting in which the cake is represented by an undirected graph. This generalizes the canonical interval cake and allows for modeling the division of road networks. We show that when the graph is a forest, an allocation satisfying the well-known criterion of maximin share fairness always exists. Our result holds even when separation constraints are imposed, in which case no multiplicative approximation of proportionality can be guaranteed. Furthermore, while maximin share fairness is not always achievable for general graphs, we prove that ordinal relaxations can be attained.

\end{abstract}

\section{Introduction}

Cake cutting is an old and famous problem in resource allocation, with the cake serving as a metaphor for a heterogeneous divisible resource that is supposed to be fairly divided among interested agents.
While the problem has long enjoyed substantial attention from mathematicians and economists, it has also attracted ongoing interest from computer scientists, not least those working in artificial intelligence \citep{BalkanskiBrKu14,LiZhZh15,BranzeiCaKu16,AlijaniFaGh17,MenonLa17,BeiHuSu18,GoldbergHoSu20,HosseiniIgSe20}.
Indeed, as \citet{Procaccia13} aptly put it, cake cutting is more than just child's play.

The cake in cake cutting is typically assumed to be a one-dimensional interval.
Even though the linear representation is appropriate for modeling the division of time (for instance, usage of a jointly-owned facility), or space in a hallway, it is too simplistic to capture more complex resources such as road networks.
This consideration has led \citet{BeiSu21} to introduce a more general \emph{graphical cake} model, in which the resource comes in the form of a connected undirected graph. 
In parallel, \citet{Segalhalevi21} addressed the case of graphs given by a disjoint union of intervals.
In contrast to the single interval cake, for general graphs it is not always possible to find a connected \emph{proportional} allocation, that is, an allocation that gives every agent a connected subset of the cake worth at least $1/n$ of the agent's value for the entire cake, where $n$ denotes the number of agents among whom the cake is divided.
Nevertheless, Bei and Suksompong showed that more than half of this guarantee can be recovered: for any connected graph, it is possible to ensure every agent at least $1/(2n-1)$ of her total value, and this factor is tight in general (but can be improved for certain graphs).
For the union-of-intervals case, Segal-Halevi established an approximation factor of $1/(m+n-1)$, where $m$ denotes the number of intervals.

In this paper, we study graphical cake cutting with respect to another prominent fairness notion, \emph{maximin share fairness} \citep{Budish11}.
An allocation satisfies this notion if it assigns to every agent a bundle worth at least her maximin share, i.e., the largest value that she can get if she is allowed to partition the cake into $n$ connected parts and receives the least valuable part.
Maximin share fairness is a robust notion which can naturally take into account the features and constraints arising in various settings.
Indeed, this robustness will make it feasible to derive positive results in three settings where approximate proportionality fails.
First, we allow the graph to be arbitrary---it may be disconnected, and each of its connected components may have an arbitrary topology.
Second, we allow agents to have general monotone valuations---unlike most of the cake cutting literature, we do not assume that valuations must be additive.
With disconnected graphs or non-additive valuations, 
providing an approximate proportionality guarantee solely in terms of $n$ is impossible.
Third, we also consider a recently introduced setting of \citet{ElkindSeSu21} in which the shares of different agents should be sufficiently separated from one another; this allows us to model space between pieces of roads with different owners, for example transition or buffer zones.
In the presence of separation constraints, obtaining any multiplicative approximation of proportionality is again infeasible, as can be seen when the value of all agents is entirely concentrated in the same tiny portion. 
This observation motivated Elkind et al.~to use the maximin share benchmark when separation is imposed.
They showed that with separation, a maximin allocation always exists for a path graph, but not for a cycle graph.

\subsection{Our Contributions}

We begin in Section~\ref{sec:no-separation} by addressing the basic case where no separation constraints are imposed.
As \citet{BeiSu21} noted, in this case it can be crucial for proportionality approximations whether or not pieces of different agents are allowed to share a finite number of points.
We show that this assumption is also crucial with respect to maximin share fairness.
In particular, if points can be shared, then a maximin allocation may not exist even when the graph is a star, whereas if sharing is not allowed, the existence of such an allocation can be guaranteed for acyclic graphs (i.e., a disjoint union of trees, also known as a forest).
Our results complement those of Bei and Suksompong, who observed that approximate proportionality cannot be provided even for trees under the no-sharing assumption.
In addition, our guarantees degrade gracefully for graphical cakes with cycles: we attain a $1$-out-of-$(n+r)$ maximin allocation, where $r$ is the \emph{feedback vertex set number} of the cake (that is, the smallest number of vertices whose removal would make the cake acyclic).

In Section~\ref{sec:positive-separation}, we consider the more general case where the pieces of any two agents must be separated by distance at least a given (positive) parameter.
Our main technical result shows that a maximin allocation exists whenever the graph is acyclic---this significantly generalizes the existence result of \citet{ElkindSeSu21} for paths and complements their non-existence result for cycles.
As with paths, our proof uses the following high-level idea: Given the maximin partitions of the agents, we find a part in one agent's partition such that allocating the part to that agent rules out at most one part in each remaining agent's partition; this allows us to recurse on the remaining agents and cake.
While in the case of paths the desired part can be found by simply scanning the path from left to right,
in an arbitrary forest there is no `left' or `right', so new techniques are needed.
We develop auxiliary lemmas related to \emph{real trees}---metric spaces defined by tree graphs---which may be of independent interest.
As in the case of no separation, we obtain a $1$-out-of-$(n+r)$ maximin allocation for general graphs.

For the case of positive separation, we show that in general, the factor $n+r$ cannot be improved: for every $r\geq 0$, when $n$ is sufficiently large, there exists a graph with feedback vertex set number $r$ and a set of $n$ agents that do not admit a $1$-out-of-$(n+r-1)$ maximin allocation.
However, better guarantees can be attained for smaller $n$ and specific classes of graphs.

\subsection{Additional Related Work}

As mentioned earlier, cake cutting is a popular topic among researchers of several disciplines---see the classic books of \citet{BramsTa96} and \citet{RobertsonWe98}, as well as a more recent survey by  \citet{Procaccia16} offering a computer scientist's perspective.

Most of the work relevant to graphical fair division and separation constraints has been covered by \citet{BeiSu21} and \citet{ElkindSeSu21}; we refer to the related work section of their papers, but highlight here some important aspects of our study.
First, we assume that each agent must receive a connected piece of cake.
This assumption is often made in order to ensure that agents do not end up with a collection of crumbs---indeed, a bundle made up of tiny stretches of road in different parts of the network is unlikely to be of much use. 
Second, the connectivity requirement is imposed not only on the allocation, but also in the definition of the maximin share.
This is consistent with previous work on maximin share fairness in constrained settings \citep{BouveretCeEl17,BiswasBa18,LoncTr20,BeiIgLu21,ElkindSeSu21}.
\citet{BouveretCeEl17} proved that for \emph{indivisible} items lying on a tree, a maximin allocation exists.
However, as we discuss in Section~\ref{sec:conclusion}, the ``last diminisher'' approach that they used for this proof does not work in our setting with separation.
Recently, \citet{IgarashiZw21} studied envy-freeness in graphical cake cutting under the assumption that agents cannot share individual points, while \citet{ElkindSeSu21-Land} investigated land division with separation constraints.

In the papers above, as in our paper, the resource to be divided lies on a graph.
A complementary line of work studies fair division scenarios in which the \emph{agents} lie on a graph indicating their acquaintance \citep{abebe2016fair,BeiQiZh17,AzizBoCa18}.

\section{Preliminaries}

There is a set of agents $\mathcal{N}=[n]$, where $[k]:=\{1,2,\dots,k\}$ for any positive integer $k$.
The cake is represented by a finite undirected graph $G=(V,E)$, which may be connected or not.
Each agent has a nonnegative, monotone, and continuous valuation function $v_i$, which is not necessarily additive.
In particular, continuity implies that the vertices in $V$ have zero value.
Note that the cases studied by \citet{ElkindSeSu21} and in most cake cutting papers are special cases of this model: an interval cake corresponds to taking $G$ to be any path graph, while a pie cake is equivalent to a cycle graph.
A \emph{piece of cake} is a finite union of intervals from one of more edges in $E$.
The piece is said to be \emph{connected} if for any points $x,y$ in it, one can get from $x$ to $y$ along the graph $G$ by only traversing this piece of cake.
We assume that each agent must receive a connected piece of cake.

There is a separation parameter $s\ge 0$.
When $s > 0$, the edge lengths play an important role.
We measure distance along the edges of $G$. For any two points $x,y\in G$, we denote by $\distg(x,y)$ the length of a shortest path from $x$ to $y$ along the edges of $G$; if $x$ and $y$ belong to different connected components of $G$, we set $\distg(x,y) = \infty$.
For two pieces of cake $X,Y\subseteq G$, we denote by $\distg(X,Y)$ the shortest distance between a point in $X$ and a point in $Y$ along the edges of $G$, i.e., $\distg(X,Y) = \inf_{x\in X, y\in Y}\distg(x,y)$; if $Y$ consists of a single point $y$, we simply write $\distg(X,y)$.

A {\em partition} of the cake is a set $\mathbf{P} = \{P_1,\dots,P_n\}$, where each $P_i$ is a connected piece of cake, and the pieces are pairwise disjoint: $P_i \cap P_j = \emptyset$ for all $i\ne j$. When $s=0$, we will
consider, in addition to the disjoint-pieces setting, an alternative setting in which $P_i \cap P_j$ may contain finitely many points.
An {\em allocation} is defined similarly, except that we have a vector $\mathbf{A} = (A_1,\dots,A_n)$ instead of a set, where piece $A_i$ is allocated to agent~$i$.
A partition $\mathbf{P}$ is said to be \emph{$s$-separated} if $\distg(P_i,P_j) \geq s$ for all $i\ne j$; an analogous definition holds for an allocation.
We assume that partitions and allocations are required to be $s$-separated.
Observe that for $s > 0$, in any $s$-separated partition or allocation, some of the cake necessarily remains unallocated.
Moreover, since any two pieces are separated by a positive distance, we assume without loss of generality in this case that the pieces contain only closed intervals.
Denote by $\Gamma_{n,s}$ the set consisting of all $s$-separated partitions.

The main fairness notion of our paper is the following:

\begin{definition}
\label{def:MMS}
The \emph{maximin share} of agent $i$, denoted by $\mms^{n,s}_i$, is defined as $\sup_{\mathbf{P}\in \Gamma_{n, s}}\min_{j\in [n]}v_i(P_j)$.
\end{definition}

We omit the superscript $s$ when it is clear from the context.
Similarly to the interval cake \citep{ElkindSeSu21}, the supremum in Definition~\ref{def:MMS} can be replaced with a maximum.
In other words, there is always a maximizing partition, which we refer to as a \emph{maximin partition}.
An allocation in which every agent receives at least her maximin share is said to be a \emph{maximin allocation}.

\section{No Separation}
\label{sec:no-separation}
In this section, we address the basic case where there is no separation constraint imposed on the allocation, i.e., $s=0$.
When $s>0$, the pieces of any two agents cannot be adjacent to each other, so we can assume without loss of generality that all pieces consist only of closed intervals and the pieces have empty intersections.
For $s=0$, however, this is not true: 
there are essential differences between 
the empty-intersection setting
and the \emph{finite-intersection} setting, in which pieces may overlap in finitely many points.
This observation was made by \citet{BeiSu21} with respect to approximate proportionality for the case of a star graph.
Specifically, in the empty-intersection setting,
$n-1$ agents do not receive the center of the star and therefore can receive cake from at most one edge, leading to strong negative results.
On the other hand, 
in the finite-intersection setting, 
decent welfare guarantees can be obtained.

As we will demonstrate, the distinction between empty intersection and finite intersection is crucial with respect to maximin share fairness too.
Note that the maximin share is calculated using the same restrictions that are imposed on allocations.
First, we show that in the finite-intersection setting,
a maximin allocation may not exist.

\begin{proposition}
\label{prop:graph-mss-negative}
Assume that the allocated pieces are allowed to intersect in a finite number of points.
There exists an instance with $n=3$ agents and a star cake in which no maximin allocation exists.
\end{proposition}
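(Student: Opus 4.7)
The plan is to construct an explicit instance---a star cake $G$ and three valuation functions $v_1, v_2, v_3$---and then argue by a case analysis that no finite-intersection allocation simultaneously delivers each agent her maximin share. First I would fix the combinatorics of $G$ (a star with three, or perhaps a few, edges of equal length) and design cyclically symmetric valuations in which agent $i$ concentrates her value on two specific center-side regions of two different edges, so that all three maximin shares take a common value $m$. The goal of the design is to force every witnessing partition for $\mms^{3,0}_i$ to use three pieces meeting at the center vertex $c$, exploiting the fact that in the finite-intersection setting arbitrarily many pieces may share $c$---an option unavailable in the empty-intersection setting and the source of the asymmetry between the two cases.

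Next I would compute $\mms^{3,0}_i$ for each agent. The lower bound $\mms^{3,0}_i \geq m$ comes from exhibiting a partition into three ``subtrees'' rooted at $c$, each taking short initial arms along two of agent $i$'s valued edges and accruing value exactly $m$. The matching upper bound follows from the structural constraint that along each edge at most one piece can extend positively from $c$: this rules out alternative partition shapes that might otherwise yield a higher min-value, so any partition achieving value $m$ on all three pieces must essentially take the form described.

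Finally, I would show that no allocation $(A_1, A_2, A_3)$ gives every agent value at least $m$. The case analysis rests on the structural dichotomy that every connected piece is either a subtree rooted at $c$ (a finite collection of initial arms along some edges, all meeting at $c$) or a sub-interval of a single edge disjoint from $c$. Applying a pigeonhole argument to positive center-extensions per edge and using the cyclic conflict built into the valuations, I would argue that in every allocation some agent is deprived of the center-side region she needs, so that her bundle has value strictly below $m$.

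The hard part will be calibrating the valuations so that simultaneously (i) the maximin partition genuinely requires three center-sharing pieces and cannot be replaced by one in which only one or two pieces pass through $c$, and (ii) the most natural allocations---in particular the one that simply gives each agent a single full edge---fail to reach the threshold $m$ for at least one agent. Securing (ii) amounts to arranging that each $v_i(e_j)$ is strictly smaller than $m$, so that every MMS-attaining bundle must draw value from two different center-side regions via the center; once that is in place, the three agents' demands for the same center-extensions are provably incompatible, producing the desired counterexample.
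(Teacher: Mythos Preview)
Your plan diverges from the paper's approach and, as described, cannot be completed. The paper does not build a small cyclic counterexample; instead it embeds the Kurokawa--Procaccia--Wang indivisible-goods instance into a star with \emph{twelve} edges (one per object), spreading each object's value uniformly over its edge. Since every single edge is worth less than the common maximin share $4055000$, any maximin allocation must give each agent a union of whole edges, reducing the cake problem to the indivisible one, where non-existence is already established.

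Your construction contains an internal inconsistency: you want each agent to concentrate value on exactly two edges, yet also want a maximin partition consisting of three subtrees rooted at $c$, ``each taking short initial arms along two of agent $i$'s valued edges''---but three subtrees cannot share two arms, since along any given edge only one piece can extend positively from $c$. More decisively, with two valued edges summing to $V$, the maximin share satisfies $m \le V/3$ by additivity, while some edge has value at least $V/2 > V/3$; hence your condition (ii), that every single-edge value lie strictly below $m$, is unattainable. Concretely, the natural three-edge cyclic instance (agent $i$ values $e_i$ and $e_{i+1}$ uniformly at $1$ each) has $\mms_i = 2/3$ and \emph{does} admit a maximin allocation: give agent $i$ the outer $2/3$ of $e_i$. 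Producing a three-agent MMS counterexample from scratch is exactly the hard part of the Kurokawa--Procaccia--Wang result; the key idea you are missing is to leverage that construction by reduction rather than to reinvent it.
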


\begin{proof}
The proof follows the celebrated Theorem 2.1 of \citet{KurokawaPrWa18}, which shows that a maximin allocation of indivisible objects may not exist for $n=3$ agents.

In their instance, there are $12$ objects, indexed by $j\in[3]$ and $k\in[4]$. Each agent $i\in[3]$ values each object $(j,k)$ by:
\begin{align*}
v_i(j,k) = 1000000 + 1000\cdot T_{j,k} + E_{j,k}^{(i)}
\end{align*}
where $T, E^{(1)}, E^{(2)}, E^{(3)}$ are carefully chosen $3\times 4$ matrices with all values smaller than $100$. 
Kurokawa et al.~proved that every agent can partition the objects into $3$ subsets of $4$ objects each, in such a way that the sum of values in each subset is exactly $4055000$; this value is therefore the maximin share of all agents.
These authors then showed that no allocation gives every agent at least this value.

In our instance, there is a star graph with $12$ edges connected to a single center vertex $c$. The edges are indexed by $j\in[3]$ and $k\in[4]$. Each agent $i\in[3]$ has value $v_i(j,k)$ for the edge $(j,k)$, and this value is spread uniformly across the edge. 
Since the pieces may intersect in a finite number of points, the maximin share of each agent is also $4055000$ in our instance, by partitioning the set of edges to $3$ subsets of $4$ edges each (these subsets intersect in the single point $c$).

If an agent's piece is contained in a single edge, then her value is clearly less than $2000000$. 
Hence, in a maximin allocation, each edge must belong to only one agent.
We may therefore assume without loss of generality that each agent receives a piece containing two or more whole edges. But the same argument as that of \citet{KurokawaPrWa18} shows that no such allocation can be a maximin allocation.
\end{proof}

Next, we show that in the 
empty-intersection setting, a maximin allocation always exists when the cake is a forest.

\begin{theorem}
\label{thm:forest-0}
Let $G$ be a forest and $s=0$.
Assume that all allocated pieces must be completely disjoint.
For agents with arbitrary monotone valuations, a maximin allocation exists.
\end{theorem}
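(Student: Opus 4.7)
The plan is induction on $n$. The base case $n=1$ is immediate: allocate the entire cake to the single agent. For the inductive step, I would fix for each agent $i$ a maximin partition $\mathbf{P}^i = (P^i_1, \dots, P^i_n)$ and look for an agent $i^*$ together with a \emph{pendant chunk} $P \subseteq G$---a closed connected subset whose topological boundary in $G$ is at most one point (including the degenerate case of an entire tree component of $G$)---satisfying (i) $v_{i^*}(P) \geq \mms_{i^*}$, and (ii) for every $i' \neq i^*$, $P$ meets at most one piece of $\mathbf{P}^{i'}$. Given such $P$, I would assign it to $i^*$. For every remaining agent $i'$, the $n-1$ pieces of $\mathbf{P}^{i'}$ disjoint from $P$ sit inside $G \setminus P$ and are each worth at least $\mms_{i'}$; because $G$ is acyclic, the leftover fragment $P^{i'}_{j^\star} \setminus P$ from the unique intersecting piece can be absorbed into one or more adjacent retained pieces while preserving connectivity, and monotonicity ensures the enlarged pieces still have value at least $\mms_{i'}$. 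Thus $i'$'s maximin share on $G \setminus P$ for $n-1$ agents is at least $\mms_{i'}$, and the induction hypothesis applied to $G \setminus P$ finishes the argument.

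The engine of the search for $(i^*, P)$ is the observation that, whenever $P$ is a pendant chunk and $v_{i'}(P) < \mms_{i'}$, $P$ meets at most one piece of $\mathbf{P}^{i'}$. Indeed, any piece of $\mathbf{P}^{i'}$ fully contained in $P$ would have value at least $\mms_{i'}$ by the maximin guarantee and at most $v_{i'}(P) < \mms_{i'}$ by monotonicity---a contradiction; and since the pieces of $\mathbf{P}^{i'}$ are pairwise disjoint, at most one of them can contain the unique boundary point of $P$, and only such a piece can straddle. So it suffices to produce a pendant chunk $P$ and an agent $i^*$ with $v_{i^*}(P) \geq \mms_{i^*}$ and $v_{i'}(P) < \mms_{i'}$ strictly for every $i' \neq i^*$.

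To that end I would parameterize pendant chunks by a cut point $x \in G$ together with a choice of ``side'' (a connected component of $G \setminus \{x\}$) and, via continuity of the $v_j$'s and a compactness or Zorn-type argument on this parameter space, extract a $\subseteq$-minimal pendant chunk $P^*$ for which $v_j(P^*) \geq \mms_j$ holds for at least one agent $j$. Minimality and continuity together force $v_j(P^*) \leq \mms_j$ for every agent $j$, with equality for at least one; declare that agent $i^*$.

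The main obstacle I anticipate is handling \emph{ties}, where several agents simultaneously attain $v_j(P^*) = \mms_j$. Under non-additive valuations, a tied agent's maximin partition may place several pieces inside $P^*$ each worth exactly $\mms_{i'}$, so the monotonicity argument above no longer forbids $P^*$ from meeting more than one of them. Resolving this will likely require exploiting the real-tree geometry of the forest---for example refining the minimality direction-by-direction at branch points of $G$, or perturbing tied agents' maximin partitions within the compact set of MMS-optimizers---which is precisely where tree-specific techniques beyond the path's left-to-right scan enter.
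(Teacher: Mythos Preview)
Your recursive skeleton matches the paper's: hand one agent a piece so that every other agent loses at most one maximin part, then recurse on $n-1$ agents. The divergence is in how that piece is found, and here your plan has a genuine gap that the paper's route avoids entirely.

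The ties obstacle you flag is real, not a technicality to be perturbed away. Take $G=[0,3]$, $n=3$, all agents with Lebesgue valuation (so each $\mms_i=1$), and maximin partitions
\[
\mathbf{P}^1=\{[0,1),[1,2),[2,3]\},\quad
\mathbf{P}^2=\{[0,1],(1,2],(2,3]\},\quad
\mathbf{P}^3=\{[0,1),[1,2],(2,3]\}.
\]
The only $\subseteq$-minimal closed pendant chunks reaching value $1$ for some agent are $[0,1]$ and $[2,3]$; all three agents are tied at each, and for every choice of $i^*$ there is a remaining agent with two partition pieces meeting the chunk (one fully contained, one touching only at the boundary point). So condition~(ii) fails no matter whom you designate as $i^*$. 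Your suggestion to perturb the maximin partitions could salvage this particular instance, but making it rigorous means proving that every tied agent admits \emph{some} maximin partition with at most one piece meeting $P^*$---which is essentially the statement you are trying to establish. Allowing half-open pendant chunks would also fix this example, but then the open/closed bookkeeping becomes the heart of the argument, and at branch vertices ``which side is open'' is no longer a single bit.

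The paper sidesteps value comparisons altogether. Rather than carving an extrinsic pendant chunk, it pools \emph{all} parts from \emph{all} agents' maximin partitions lying in one tree of $G$ and proves a Helly-type lemma: among any family $(X_1,\dots,X_k)$ of connected subsets of a tree, some $X_{j^*}$ is \emph{$0$-good}, meaning that whenever $X_{j_1}\cap X_{j^*}\neq\emptyset$ and $X_{j_2}\cap X_{j^*}\neq\emptyset$, also $X_{j_1}\cap X_{j_2}\neq\emptyset$. Applied to the pooled family, the $0$-good piece is some agent's own maximin part $T_{i,j}$; since any fixed agent's parts are pairwise disjoint, at most one of them can intersect $T_{i,j}$. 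This delivers your condition~(ii) for free---no minimality, no comparison of $v_{i'}(P)$ with $\mms_{i'}$, no ties. The open/closed case analysis you anticipated is confined to the lemma's short inductive proof (peeling off a leaf edge at each step) rather than entangled with the main recursion.

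A minor side remark: your ``absorb the leftover fragment'' step is unnecessary, since in the paper's model a partition is merely a set of pairwise-disjoint connected pieces and need not cover the cake; the surviving $n-1$ parts already witness the required maximin share on $G\setminus P$.
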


Intuitively, given the $n$ maximin partitions of the agents, we want to choose a part in one agent's partition that overlaps at most one part in each remaining agent's partition---this will allow us to recurse on the remaining agents and their leftover partitions.
To this end, we introduce the following definition.
Given a graph $G$ and a family $\mathbf{X} := (X_1,\ldots,X_k)$ of connected pieces of $G$, a piece $X_{j^*}$ is called \emph{$0$-good} provided that
for all $j_1,j_2\in[k]$, the following holds: If $X_{j_1}\cap X_{j^*} \neq \emptyset$ and  $X_{j_2}\cap X_{j^*} \neq \emptyset$, then
$X_{j_1}\cap X_{j_2} \neq \emptyset$.

\begin{lemma}
\label{lem:tree-0}
Let $G$ be a tree and  $\mathbf{X} := (X_1,\ldots,X_k)$ a family of connected subsets of $G$, for some integer $k\geq 1$.
For some $j^*\in[k]$, the piece $X_{j^*}$ is $0$-good.
\end{lemma}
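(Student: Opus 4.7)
The plan is to root the tree $G$ at an arbitrary point $r$ and exploit the resulting hierarchical structure. For each point $x\in G$ write $\mathrm{dep}(x):=\distg(r,x)$, and for each connected subset $X_j$ define its \emph{apex} $a_j$ to be a point of $X_j$ of minimum depth. This apex is unique, because two minimum-depth points of $X_j$ would be joined, inside $X_j$, by the unique tree path between them, which must pass through their lowest common ancestor---a point of strictly smaller depth, contradicting minimality. I would then pick $j^*\in[k]$ so that $\mathrm{dep}(a_{j^*})$ is \emph{maximum}, i.e.\ choose the piece with the deepest apex.

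The verification that this $X_{j^*}$ is $0$-good rests on one easy preliminary: for any connected subset $X$ with apex $a$, every point $p\in X$ has $a$ as an ancestor, i.e.\ the unique path from $r$ to $p$ passes through $a$. Indeed, the $a$-to-$p$ path lies in $X$ by connectedness, its highest point is the LCA of $a$ and $p$, and this LCA cannot have depth strictly less than the minimal depth achieved at $a$, so it must equal $a$. Now suppose $p_1\in X_{j_1}\cap X_{j^*}$ and $p_2\in X_{j_2}\cap X_{j^*}$. Applying the preliminary to $X_{j^*}$ shows that $a_{j^*}$ lies on the $r$-to-$p_1$ path, and applying it to $X_{j_1}$ shows that $a_{j_1}$ does too. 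By the maximizing choice of $j^*$, one has $\mathrm{dep}(a_{j_1}) \le \mathrm{dep}(a_{j^*}) \le \mathrm{dep}(p_1)$, so on that root-to-$p_1$ path $a_{j^*}$ sits between $a_{j_1}$ and $p_1$; hence $a_{j^*}$ lies on the $a_{j_1}$-to-$p_1$ subpath, which is contained in $X_{j_1}$ by connectedness. Therefore $a_{j^*}\in X_{j_1}$, and the symmetric argument gives $a_{j^*}\in X_{j_2}$, so the two pieces share the point $a_{j^*}$.

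The principal subtlety I anticipate is that $G$ is a ``real tree''---its edges are continuous intervals rather than atomic objects---so apexes can be interior edge points and one needs the infimum of depth over $X_j$ to be attained. This is handled by working with closed pieces: in the empty-intersection setting, any connected piece arising from a maximin partition can be taken to be a finite union of closed subintervals, and is therefore compact, making the apex well-defined. After this mild technicality, the argument is purely order-theoretic on the rooted real tree and I expect no further obstacles.
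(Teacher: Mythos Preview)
Your rooted-apex argument is correct for \emph{closed} pieces---indeed, it is essentially the proof the paper gives for the $s>0$ analogue (Lemma~\ref{lem:tree-s}), where closedness is part of the hypothesis. The gap is your last paragraph: the claim that maximin pieces ``can be taken to be a finite union of closed subintervals'' is false in the empty-intersection setting, and the paper explicitly flags this. For $n=3$ on a star with three equal-value edges, every maximin partition assigns one edge to each agent; since the center $c$ can lie in only one piece, the other two are half-open at $c$ and hence not compact. No maximin partition here has all pieces closed. Without closedness the apex need not belong to its piece, and your step ``the $a_{j_1}$-to-$p_1$ subpath \ldots\ is contained in $X_{j_1}$ by connectedness'' fails. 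Concretely, on a star with center $c$ and edges $e_1,e_2$, rooted at $c$, take $X_1=e_1\setminus\{c\}$, $X_2=e_2\setminus\{c\}$, $X_3=e_1\cup\{c\}\cup e_2$. All three closure-apexes equal $c$ at depth~$0$, so your rule permits $j^*=3$; but $X_3$ is not $0$-good ($X_1$ and $X_2$ both meet $X_3$ yet are disjoint from each other), and your argument would wrongly conclude $c=a_{j^*}\in X_1$.

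The paper handles open and half-open pieces by a different route: induction on the number of edges, peeling off a leaf edge $e$ and splitting into the case where some $X_j\subseteq e$ (treated as on an interval, with explicit care for open endpoints) versus the case where every $X_j$ meets the smaller tree. Your apex idea can in fact be rescued with an additional tie-breaking rule---among pieces of maximal closure-apex depth, prefer one that does \emph{not} contain its apex, and only if none exists pick arbitrarily---but that refinement and its verification are absent from the proposal.
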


In order to prove this lemma, we must handle both open and closed pieces.\footnote{
We are grateful to Alex Ravsky for the proof idea.
} 
Indeed, for $n=3$ and a star graph with three edges of equal value, a maximin partition contains two open pieces and one closed piece.\footnote{
The difference between the empty-intersection and the finite-intersection settings
stems from the fact that 
the finite-intersection analogue of Lemma \ref{lem:tree-0} does not hold.
 For example, when $G$ is a star graph with center $c$ and edges $e_1,e_2,e_3,e_4$, and  $\mathbf{X} = (e_1\cup c \cup e_2,~ e_3\cup c \cup e_4,~ e_1\cup c \cup e_3,~ e_2\cup c \cup e_4)$, 
there is no $X_{j*}$ with the property that
if $X_{j_1}\cap X_{j*}$ is infinite and  $X_{j_2}\cap X_{j*}$ is infinite, then
$X_{j_1}\cap X_{j_2}$ is infinite.
}

\begin{proof}[Proof of Lemma~\ref{lem:tree-0}]
Let $m$ be the number of edges in $G$. 
The proof is by induction on $m$.
For the base case $m = 1$, $G$ is an interval; assume without loss of generality that it is the interval $[0,1]$.
Each $X_j$ is also an interval which may be open, half-open, or closed. 
For each $j\in[k]$, let $\ell_j$ and $r_j$ be the left and right endpoint of $X_j$, respectively.
Choose $j^*\in[k]$ such that $r_{j^*}$ is smallest. 
If $X_{j^*}$ contains its right endpoint (i.e., it is right-closed), 
then any piece $X_{j}$ intersecting $X_{j^*}$ must have $\ell_{j}\leq r_{j^*}$ and must contain the point $r_{j^*}$.
On the other hand, if $X_{j^*}$ does not contain its right endpoint (i.e., it is right-open),
then any piece $X_{j}$ intersecting $X_{j^*}$ must have $\ell_{j} < r_{j^*}$ and must contain an interval $(r_{j^*}-\varepsilon, r_{j^*})$ for some sufficiently small $\varepsilon>0$.
In both cases, any two such pieces $X_j$ intersect, so $X_{j^*}$ is $0$-good.
This concludes the base case.

We proceed to the inductive step.
Let $m\ge 2$, assume that the statement holds for graphs with at most $m-1$ edges, and suppose that $G$ has $m$ edges. 
Since $G$ is a tree, it has a leaf, i.e., a vertex $w$ connected to a single edge $e$. Let $G^-$ be the graph $G$ without the vertex $w$ and the edge $e$.
Let $u$ be the vertex at the other end of $e$; note that $u$ is a vertex of $G^-$.
We consider two cases.

\underline{Case 1:}
At least one piece of $\mathbf{X}$ is contained in $e$ (so it is an interval).
Among all such intervals, choose an $X_{j^*}$ whose closest point to $u$ is as far away as possible.
Then $X_{j^*}$ is $0$-good by the same arguments as in the base case $m=1$.

\underline{Case 2:}
No piece of $\mathbf{X}$ is contained in $e$.
This means that all pieces of $\mathbf{X}$ intersect $G^-$. 
Let $\mathbf{X}' := (X'_1,\ldots,X'_k)$, where
$X'_j := X_j \cap G^-$ for each $j\in[k]$.
By the inductive assumption applied to $G^-$, at least one piece in $\mathbf{X'}$, say $X'_{j^*}$, is $0$-good with respect to $\mathbf{X'}$.
It suffices to show that $X_{j^*}$ is also $0$-good with respect to $\mathbf{X}$.

We claim that if $X_{j^*}$ intersects some other piece $X_j$, then $X_{j^*}'$ also intersects $X_{j}'$.
To see this, note that the intersection between $X_{j^*}$ and $X_j$ may occur either in $G^-$ or in $e$ (or both). If the intersection occurs in $G^-$, then $X_{j^*}'$ intersects $X_{j}'$ and we are done.
If the intersection occurs in $e$, then $X_{j^*}\cap e$ intersects $X_{j}\cap e$, so both of the intersections are non-empty.
But by the assumption of Case~2, all pieces of $\mathbf{X}$ intersect $G^-$.
Therefore, both $X_{j^*}$ and $X_{j}$ contain the point $u$,  which is the unique point connecting $e$ and $G^-$. Therefore $u\in X_{j^*}' \cap X_{j}'$, so again $X_{j^*}'$ intersects $X_{j}'$.
This establishes the claim.

We now show that $X_{j^*}$ is $0$-good with respect to $\mathbf{X}$.
Suppose that $X_{j^*}$ intersects two other pieces in $\mathbf{X}$, say $X_{j_1}$ and $X_{j_2}$.
By the claim in the previous paragraph, 
$X'_{j^*}$ intersects both $X'_{j_1}$ and $X'_{j_2}$.
Since $X_{j^*}'$ is $0$-good with respect to $\mathbf{X}'$, it must be that
$X_{j_1}'$ intersects $X_{j_2}'$.
In particular, $X_{j_1}$ intersects $X_{j_2}$.
Hence, $X_{j^*}$ is $0$-good with respect to $\mathbf{X}$.
\end{proof}

With Lemma~\ref{lem:tree-0} in hand, we can now show that, in the empty-intersection setting, a maximin allocation exists whenever the cake is a forest.

\begin{proof}[Proof of Theorem~\ref{thm:forest-0}]
For each agent, consider her maximin partition.
Every part of the partition is contained in some tree of the forest. 
Let $T\subseteq G$ be a tree that contains at least one part from the maximin partition of at least one agent. 

For every agent $i\in \mathcal{N}$, let $k_i$ be the number of parts of $i$'s maximin partition that are contained in $T$, and denote the parts by $T_{i,1},\ldots,T_{i,k_i}$.
By Lemma \ref{lem:tree-0}, there exists some $i\in \mathcal{N}$ and $j\in[k_i]$ such that $T_{i,j}$ is $0$-good.
Allocate the part $T_{i,j}$ to agent $i$, 
and divide the remaining cake recursively among the remaining agents.

The remaining cake is still a forest.
By definition of a $0$-good subset, for every other agent, at most one part of her maximin partition overlaps the allocated piece $T_{i,j}$. 
Hence, for each of the $n-1$ remaining agents, at least $n-1$ parts from her maximin partition remain intact. Therefore the recursive call indeed returns a maximin allocation.
\end{proof}

As we have seen, the seemingly minor distinction of whether individual points can be shared among allocated pieces makes a decisive difference in relation to maximin share fairness.
Which assumption is more realistic depends on the use case, for example whether road intersections can only be owned by one agent or shared by multiple agents.
\citet{BeiSu21} showed that nontrivial egalitarian welfare can be obtained only when sharing is allowed.
Thus, our results complement theirs by exhibiting that even when sharing is infeasible, a reasonable fairness guarantee can still be made in terms of the maximin share.

We now proceed to general graphs.
We consider an ordinal relaxation called \emph{$1$-out-of-$k$ maximin share}, denoted by $\mms_i^{k,s}$, or simply $\mms_i^{k}$ when $s$ is clear from the context.
The idea is that instead of taking partitions into $n$ parts as in the canonical maximin share, we allow partitions into $k$ parts, where $k > n$ is a given parameter.
For each graph $G$, 
let $\fvsnum(g)$ be the \emph{feedback vertex set number} of $G$, that is, the minimum number of vertices whose removal makes the graph acyclic.\footnote{
Computing $\fvsnum(g)$ is NP-hard \citep{karp1972reducibility},
but here we use it only for existence proofs.
Note that $\fvsnum(G)$ is upper-bounded by the \emph{circuit rank} of $G$, that is, the minimum number of \emph{edges} whose removal makes the graph acyclic.
The circuit rank of a graph $G = (V,E)$ with $c$ connected components is $|E|-|V|+c$.
}

\begin{theorem}
\label{thm:graph-0}
Let $s = 0$,
and assume that all allocated pieces must be completely disjoint.
For any graph $G$ and any $n$ agents with arbitrary monotone valuations, there exists an allocation of $G$ in which each agent $i$ receives a connected piece with value at least $\emph{MMS}_i^{n+\fvsnum(G)}$.
\end{theorem}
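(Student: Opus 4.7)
The plan is to reduce to the forest result (Theorem~\ref{thm:forest-0}) by removing a feedback vertex set. Let $r := \fvsnum(G)$ and fix a set $S \subseteq V$ with $|S| = r$ whose removal makes $G$ acyclic, so that $G' := G - S$ is a forest. For each agent $i$, take a maximin partition $\mathbf{P}_i = (P_{i,1}, \ldots, P_{i,n+r})$ witnessing $\mms^{n+r}_i$, so every $P_{i,j}$ satisfies $v_i(P_{i,j}) \geq \mms^{n+r}_i$.

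The first step is to identify, for each agent, many parts that live entirely in $G'$. Call a part of $\mathbf{P}_i$ \emph{internal} if it contains no vertex of $S$. Since we are in the empty-intersection setting, the parts of $\mathbf{P}_i$ are pairwise disjoint, so each vertex of $S$ lies in at most one part; hence at least $n$ of the $n+r$ parts are internal. Moreover, every internal part $P$ is connected not only in $G$ but also in $G'$: since $P$ contains no vertex of $S$, any path inside $P$ witnessing its connectivity in $G$ avoids $S$ and therefore remains a valid path in $G - S$.

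Next, I would run the recursive algorithm from the proof of Theorem~\ref{thm:forest-0} on these internal parts in the forest $G'$. Let $\mathcal{A}$ denote the set of still-unallocated agents, and call an internal part \emph{valid} if it is disjoint from every previously allocated piece. At each iteration, pick any tree $T$ of the current residual forest (that is, $G'$ with the already-allocated pieces deleted, which is still a forest) that contains at least one valid internal part from some agent in $\mathcal{A}$; apply Lemma~\ref{lem:tree-0} to the family of all valid internal parts lying in $T$ to obtain a $0$-good piece $X_{j^*}$; and allocate $X_{j^*}$ to its owner $i^* \in \mathcal{A}$. I would then maintain the invariant that every $j \in \mathcal{A}$ has at least $|\mathcal{A}|$ valid internal parts, initialized by the fact that each agent starts with at least $n$ of them. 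To preserve it, observe that for a remaining agent $j$, her valid internal parts outside $T$ are not touched by $X_{j^*} \subseteq T$, and among her valid internal parts inside $T$, $0$-goodness together with pairwise disjointness of $\mathbf{P}_j$ forces at most one to intersect $X_{j^*}$; thus $j$ loses at most one valid internal part per iteration.

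I expect the main subtlety, rather than a serious obstacle, to be the preparatory claim that each agent retains at least $n$ internal parts \emph{that remain connected} after $S$ is removed; this rests critically on the empty-intersection convention and on the definition of connectivity as being realized through the piece itself. Once this claim is in hand, the allocation loop is an essentially verbatim reprise of Theorem~\ref{thm:forest-0}, producing pairwise disjoint connected pieces in which agent $i$ receives value at least $\mms^{n+r}_i$, as required.
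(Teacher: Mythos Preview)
Your proposal is correct and follows essentially the same route as the paper: remove a minimum feedback vertex set $S$ of size $r$, observe that in the empty-intersection setting each vertex of $S$ lies in at most one part of any agent's $(n+r)$-maximin partition so at least $n$ parts per agent survive intact in the resulting forest, and then invoke the forest argument (Theorem~\ref{thm:forest-0}/Lemma~\ref{lem:tree-0}). The only cosmetic difference is that the paper cites Theorem~\ref{thm:forest-0} as a black box (implicitly using that its proof yields the stronger ``each agent can be given one of her own $n$ surviving parts''), whereas you explicitly re-run the recursive $0$-good selection with the invariant $|\mathcal{A}|$ valid parts per remaining agent; your version is slightly more self-contained but not substantively different.
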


\begin{proof}
Let $r := \fvsnum(G)$.
For each agent, consider her $1$-out-of-$(n+r)$ maximin partition.
Pick a subset of $r$ vertices upon whose deletion the remaining graph is a forest.
Delete each of these vertices (while keeping its adjacent edges intact as open intervals).
By the empty-intersection assumption, each vertex deletion harms at most one part in each agent's partition.
Therefore, once the graph becomes a forest, for every agent, at least $n$ parts remain.
By Theorem \ref{thm:forest-0}, there is an allocation in which every agent $i$ gets at least one of her maximin parts, and therefore value at least $\mms_i^{n+r}$.
\end{proof}
For every graph $G$, let $\mmsrank(G)$ be the smallest integer $r\geq 0$ such that
for any integer $n\geq 1$ and any $n$ agents with arbitrary monotone valuations, there exists an allocation of $G$ in which each agent $i$ receives a connected piece with value at least $\mms_i^{n+r}$.
Theorem~\ref{thm:graph-0} shows that 
$\mmsrank(G)\leq \fvsnum(G)$; we do not know if this inequality is tight.

When agents' valuations are \emph{additive},  Theorem~\ref{thm:graph-0} is \emph{not} tight for some graphs. In particular, 
when $G$ is a cycle, $\fvsnum(G)=1$, but it is known that  $\mms_i^{n}$ can be guaranteed to all agents (by reduction to an interval, for which proportionality can be guaranteed). 
Therefore, $\mmsrankadd(G) = 0$,
where $\mmsrankadd(G)$ is defined analogously to $\mmsrank(G)$ for additive valuations. 

\begin{open}
\label{open:mms-rank}
(a) Are there classes of graphs $G$ for which 
$\mmsrank(G) < \fvsnum(G)$?

(b) Are there graphs $G$ for which 
$\mmsrankadd(G) > 0$
(that is, for some $n$ agents with additive valuations, a $1$-out-of-$n$ maximin allocation does not exist)?
\end{open}

\section{Positive Separation}
\label{sec:positive-separation}

In this section, we consider the case where a separation constraint is imposed, that is, $s > 0$.

\subsection{Cutting Forests}
First, we assume that $G$ is a forest.
We start with several lemmas about the $\distg$ metric when $G$ is a tree.\footnote{
In this case, the metric space $\distg$ is known as a \emph{real tree} or an \emph{$\mathbb{R}$-tree} \citep{Bestvina01}.
}

By definition of a tree, for any two points $x,y\in G$, there is a unique (simple) path between $x$ and $y$.
Denote this unique path by $\pathg(x,y)$ or $x\to y$, and observe that the length of this path is $\distg(x,y)$.
We say that two subsets of $G$ are \emph{essentially-disjoint} if they intersect in at most a single point.

\begin{lemma}
\label{lem:unique-nearest}
Let $G$ be a tree, $X\subseteq G$  a closed connected subset, and $r\in G$ a point.
There exists a unique point $x_*\in X$ (a function of $X$ and $r$) satisfying the following properties:

(a) The path from any point in $X$ to $r$ passes through $x_*$. That is, for any $y\in X$: $x_* \in \pathg(y,r)$.

(b) $x_*$ is closer to $r$ than any other point in $X$ is. That is, $\distg(x_*,r) < \distg(y,r)$  for all $y\in X$.

(c) For any $y\in X$, $\distg(y,r) = \distg(y,x_*)+\distg(x_*,r)$.
\end{lemma}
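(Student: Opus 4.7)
My plan is to establish existence by compactness and to derive all three properties (together with uniqueness) from a single minimization argument that exploits the tree-metric structure.

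First, since $G$ is a finite graph and $X \subseteq G$ is assumed closed and connected, $X$ is compact in $G$, while $y \mapsto \distg(y,r)$ is $1$-Lipschitz, hence continuous on $X$. I would define $x_*$ to be any point of $X$ achieving the minimum of this distance; existence of a minimizer is then immediate, and the goal becomes proving that any such $x_*$ satisfies (a)--(c).

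The key geometric fact I would invoke is the \emph{median property} of tree metrics: for any three points $a,b,c \in G$ there is a unique point $m = m(a,b,c)$ lying on each of $\pathg(a,b)$, $\pathg(a,c)$, $\pathg(b,c)$ and satisfying $\distg(a,b) = \distg(a,m) + \distg(m,b)$ together with the analogous identities. In an $\mathbb{R}$-tree this is just the observation that, for fixed $a$, the unique simple paths $\pathg(a,b)$ and $\pathg(a,c)$ share an initial subpath ending precisely at $m$. I would apply this with $a = y$, $b = x_*$, $c = r$ for an arbitrary $y \in X$. Because $X$ is connected in the tree $G$, the whole path $\pathg(y,x_*)$ lies in $X$ (the standard ``tree-convexity'' of connected subsets), so $m \in X$. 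Since $m$ also lies on $\pathg(x_*,r)$, we get $\distg(m,r) \le \distg(x_*,r)$; by the minimality of $x_*$ this inequality must be tight, forcing $\distg(x_*,m) = 0$, i.e.\ $m = x_*$.

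From $m = x_*$ I can read off (a) directly, since $x_* = m \in \pathg(y,r)$, and (c), since $\distg(y,r) = \distg(y,m) + \distg(m,r) = \distg(y,x_*) + \distg(x_*,r)$. Property (b) then follows from (c) whenever $y \neq x_*$ (as $\distg(y,x_*)>0$), and uniqueness is a free consequence of (b). The main obstacle I anticipate is not the argument itself but the care needed for the real-tree model: I must verify that $\pathg(y,x_*) \subseteq X$ really follows from $X$ being connected in $G$ (not just in a graph-theoretic sense but as a subset of the metric realization), and I must remember that pieces of cake may include open edge-fragments — which is precisely why the closedness assumption on $X$ is essential for compactness and hence for the minimizer $x_*$ to exist.
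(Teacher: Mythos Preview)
Your argument is correct, and it takes a genuinely different route from the paper's. The paper does not minimize: it fixes, for each $y\in X$, the ``exit point'' $y_*$ where $\pathg(y,r)$ last meets $X$, and then shows directly that $y_* = z_*$ for all $y,z\in X$ by observing that otherwise the two exit paths together with a path inside $X$ would close up into a cycle. Properties (b) and (c) are then read off from (a). Your version instead picks $x_*$ by compactness as a distance-minimizer and appeals to two standard facts about $\mathbb{R}$-trees --- the median property for triples and the convexity of connected subsets --- to force the median $m(y,x_*,r)$ to equal $x_*$. What the paper's approach buys is self-containment: no outside structural lemma is quoted, and the cycle contradiction is the whole proof. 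What your approach buys is uniformity (no separate case for $r\in X$) and a cleaner link to the ambient metric-geometry toolkit; indeed the two black boxes you invoke --- medians and convexity --- are precisely the facts the paper is implicitly reproving and will use again downstream. Your own caveat about verifying that connectedness of $X$ in the metric realization really yields $\pathg(y,x_*)\subseteq X$ is the right place to be careful, but this is a standard property of real trees and causes no trouble here.
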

\begin{proof}
If $r\in X$, then all three claims hold trivially by taking $x_*=r$. Assume therefore that $r\not\in X$.

(a)
For each point $y\in X$, denote by $y_*$ the unique point on $X\cap \pathg(y,r)$ that is closest to $r$ (intuitively, the point at which $\pathg(y,r)$ leaves $X$ and heads towards $r$).
We claim that this point is the same for all points in $X$, i.e., if $y,z \in X$ then $y_*=z_*$, as in the illustration below, where we denote this common point by $x_*$:

\begin{center}
\begin{tikzpicture}[scale=0.8]
\draw[line cap=round, gray!40, line width=5mm] (-3,0)--(5,0);
\node (r)  at (0,1.5) {$r$};
\node (xs) at (0,0) {$x_*$};
\node (y)  at (-3,0) {$y$};
\node (z)  at (5,0) {$z$};
\node (X)  at (1,-0.6) {$X$};
\draw (y) -- (xs) -- (r);
\draw (z) -- (xs) -- (r);
\end{tikzpicture}
\end{center}

Suppose for contradiction that $y_*\neq z_*$. 
The paths $\pathg(y_*,r)$ and $\pathg(z_*,r)$ meet at $r$. 
Let $w$ be the first point at which they meet. 
So the paths $\pathg(y_*,w)$ and $\pathg(z_*,w)$ are essentially-disjoint (they intersect only at $w$).
As $X$ is connected, there is a path $\pathg(y_*,z_*) \subseteq X$; this path is essentially-disjoint from both $\pathg(y_*,w)$ and $\pathg(z_*,w)$, since these two paths intersect $X$ only at their starting point $y_*$ and $z_*$, respectively.
Therefore, we have a cycle $y_* \to w \to z_* \to y_*$, as in the following figure:

\begin{center}
\begin{tikzpicture}[scale=0.8]
\draw[line cap=round, gray!40, line width=5mm] (-3,0)--(5,0);
\node (r) at (0,2) {$r$};
\node (w) at (0,1) {$w$};
\node (y) at (-3,0) {$y$};
\node (ystar) at (-1,0) {$y_*$};
\node (z) at (5,0) {$z$};
\node (zstar) at (2,0) {$z_*$};
\node (X) at (0.5,-0.6) {$X$};
\draw (y) -- (ystar) -- (w) -- (r);
\draw (z) -- (zstar) -- (w) -- (r);
\draw (zstar) -- (ystar);
\end{tikzpicture}
\end{center}

This contradicts the assumption that $G$ is a tree.
Parts (b) and (c) follow immediately from (a).
\end{proof}

Denote the unique point $x_*$ guaranteed by  Lemma \ref{lem:unique-nearest} by $\med(X,r)$.
\footnote{
$\med(X,r)$ is closely related to the concept of \emph{median} in a tree. Given three points $x, y, z$ of a tree graph, there is a unique point in $\pathg(x,y)\cap \pathg(y,z) \cap \pathg(z,x)$; this point is called the \emph{median} of $x,y,z$.
More generally, any graph with this uniqueness property is called a \emph{median graph}; such graphs have been studied in voting theory \citep{NehringPu07}.
}
The lemma can be generalized as follows:

\begin{lemma}
\label{lem:unique-nearest-subset}
Let $G$ be a tree and $X, R\subseteq G$ be closed connected subsets with $X\cap R=\emptyset$.
There exists a unique point $x_*\in X$ (a function of $X$ and $R$) satisfying the following properties:

(a) The path from any point in $X$ to any point in $R$ passes through $x_*$.

(b) For every point $r\in R$, $x_*$ is closer to $r$ than any other point in $X$ is.

\end{lemma}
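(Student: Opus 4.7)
The plan is to reduce Lemma~\ref{lem:unique-nearest-subset} directly to Lemma~\ref{lem:unique-nearest} by fixing an arbitrary reference point $r_0\in R$, defining $x_* := \med(X,r_0)$, and arguing that this single choice of $x_*$ works for every $r'\in R$ simultaneously. Since $X\cap R=\emptyset$ we have $r_0\notin X$, so Lemma~\ref{lem:unique-nearest} applies: for every $z\in X$, the path decomposes as $\pathg(z,r_0) = \pathg(z,x_*)\cup\pathg(x_*,r_0)$, essentially-disjoint at $x_*$, with $\distg(z,r_0) = \distg(z,x_*)+\distg(x_*,r_0)$.

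The key step is property~(a): for arbitrary $z\in X$ and $r'\in R$, we need $x_*\in\pathg(z,r')$. The main tool is the tree-median of the three points $z,r_0,r'$, i.e., the unique point $m$ lying simultaneously on $\pathg(z,r_0)$, $\pathg(z,r')$, and $\pathg(r_0,r')$ (this existence-and-uniqueness property is exactly the one recalled in the footnote following Lemma~\ref{lem:unique-nearest}). Because $m\in\pathg(z,r_0)$ and $x_*\in\pathg(z,r_0)$, the point $m$ sits in one of three positions on the simple path $z\to x_* \to r_0$, and I would split into cases accordingly. If $m=x_*$, then $x_*=m\in\pathg(z,r')$ and we are done. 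If $m$ lies strictly between $z$ and $x_*$, then by connectedness of $X$ the sub-arc $\pathg(z,x_*)$ lies in $X$, so $m\in X$; but also $m\in\pathg(r_0,r')\subseteq R$ by connectedness of $R$, contradicting $X\cap R=\emptyset$. If $m$ lies strictly between $x_*$ and $r_0$, then the sub-arc $\pathg(z,m)$ of the unique simple path $z\to r_0$ passes through $x_*$, hence $x_*\in\pathg(z,m)\subseteq\pathg(z,r')$.

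Property~(b) is then immediate from (a): for $z\in X\setminus\{x_*\}$ and any $r'\in R$, the unique path $\pathg(z,r')$ traverses $x_*$, yielding $\distg(z,r') = \distg(z,x_*)+\distg(x_*,r') > \distg(x_*,r')$. Uniqueness follows from (b) alone, since two distinct candidates $x_*, x_*'$ would each be strictly closer to every $r\in R$ than the other, an obvious contradiction. The only real obstacle is verifying the middle case of the median analysis, where both the hypothesis $X\cap R=\emptyset$ and the connectedness of both $X$ and $R$ are used crucially; once one sets up the median and the path decomposition from Lemma~\ref{lem:unique-nearest}, the rest is routine book-keeping.
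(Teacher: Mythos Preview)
Your argument is correct, and it takes a genuinely different route from the paper's. The paper parallels the proof of Lemma~\ref{lem:unique-nearest}: for each $y\in X$ it first invokes Lemma~\ref{lem:unique-nearest} with the roles of $X$ and $R$ swapped to obtain a point $r_y=\med(R,y)$, then defines $y_*$ as the last point of $X$ on $\pathg(y,r_y)$, and finally shows that all the $y_*$ coincide by exhibiting a cycle (either via a common meeting point of two such paths, or via the quadrilateral $y_*\to z_*\to r_z\to r_y\to y_*$). In contrast, you fix a single anchor $r_0\in R$, apply Lemma~\ref{lem:unique-nearest} once to produce $x_*=\med(X,r_0)$, and then use the three-point median of $z,r_0,r'$ to locate $x_*$ on $\pathg(z,r')$ for every other $r'\in R$.

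What each approach buys: the paper's proof is self-contained in that it re-runs the same cycle-contradiction machinery already developed, and it highlights the symmetry between $X$ and $R$. Your proof is shorter and more direct, since it reduces everything to a single application of Lemma~\ref{lem:unique-nearest} plus the median fact already recorded in the footnote; the disjointness hypothesis $X\cap R=\emptyset$ and the connectedness of both sets are used exactly once, in ruling out the ``$m$ strictly between $z$ and $x_*$'' case. One cosmetic point: your case split phrases the outer cases as ``strictly between,'' which nominally omits the endpoints $m=z$ and $m=r_0$; the former is impossible (it would place $z\in\pathg(r_0,r')\subseteq R$), and the latter is handled verbatim by your Case~3 reasoning, so the argument goes through unchanged.
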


\begin{proof}
As in Lemma~\ref{lem:unique-nearest}, it suffices to prove part (a).
For each point $y\in X$, denote by $r_y\in R$ the unique point such that any path from $y$ to a point in $R$ must pass through $r_y$; the existence of $r_y$ follows from Lemma~\ref{lem:unique-nearest}.
Denote by $y_*$ the unique point on $X\cap \pathg(y,r_y)$ that is closest to $R$ (i.e., the point at which $\pathg(y,r_y)$ leaves $X$ towards $R$).
We claim that this point is the same for all points in $X$, i.e., if $y,z \in X$ then $y_*=z_*$, as in the illustration below, where we denote this common point by $x_*$:

\begin{center}
\begin{tikzpicture}[scale=0.8]
\draw[line cap=round, gray!40, line width=5mm] (-3,0)--(5,0);
\draw[line cap=round, gray!40, line width=5mm] (-3,1.5)--(5,1.5);
\node (R) at (1,2.1) {$R$};
\node (rstar) at (0,1.5) {$r_*$};
\node (R0) at (-3,1.5) {};
\node (R1) at (5,1.5) {};
\node (xstar) at (0,0) {$x_*$};
\node (X0) at (-3,0) {};
\node (X1) at (5,0) {};
\node (X) at (1,-0.6) {$X$};
\draw (X0) -- (xstar) -- (rstar);
\draw (X1) -- (xstar) -- (rstar);
\draw (R0) -- (rstar) -- (R1);
\end{tikzpicture}
\end{center}

Suppose for contradiction that $y_*\neq z_*$. 
If the paths $\pathg(y,r_y)$ and $\pathg(z,r_z)$ intersect at some point (in particular, this happens if $r_y = r_z$), then there is a cycle in $G$ as in the proof of Lemma~\ref{lem:unique-nearest}.
Otherwise, the paths intersect $R$ at different points $r_y\ne r_z$, as in the following figure:

\begin{center}
\begin{tikzpicture}[scale=0.8]
\draw[line cap=round, gray!40, line width=5mm] (-3,0)--(5,0);
\draw[line cap=round, gray!40, line width=5mm] (-3,1.5)--(5,1.5);
\node (R) at (0.5,2.1) {$R$};
\node (ry) at (-1,1.5) {$r_y$};
\node (rz) at (2,1.5) {$r_z$};
\node (R0) at (-3,1.5) {};
\node (R1) at (5,1.5) {};

\node (y) at (-3,0) {$y$};
\node (ystar) at (-1,0) {$y_*$};
\node (z) at (5,0) {$z$};
\node (zstar) at (2,0) {$z_*$};
\node (X) at (0.5,-0.6) {$X$};
\draw (y) -- (ystar) -- (ry);
\draw (z) -- (zstar) -- (rz);
\draw (ystar) -- (zstar);
\draw (R0) -- (ry) -- (rz) -- (R1);
\end{tikzpicture}
\end{center}

Since $R$ and $X$ are connected and disjoint from each other, we have a cycle $y_* \to z_* \to r_z \to r_y \to y_*$, contradicting the assumption that $G$ is a tree.
\end{proof}

For non-intersecting subsets $X,R\subseteq G$, denote the unique point $x_*$ guaranteed by Lemma~\ref{lem:unique-nearest-subset} by $\med(X,R)$. 
Note that $\med(X,R)\neq\med(R,X)$: the former is in $X$ while the latter is in $R$.
We now define ``$s$-good'' pieces similarly to $0$-good pieces in Lemma~\ref{lem:tree-0}.
Given a graph $G$ and a family $\mathbf{X} := (X_1,\ldots,X_k)$ of closed connected pieces of $G$, a piece $X_{j^*}$ is called \emph{$s$-good} provided that
for all $j_1,j_2\in[k]$, the following holds: If $\distg(X_{j_1}, X_{j^*})<s$ and  $\distg(X_{j_2},X_{j^*})<s$, then
$\distg(X_{j_1},X_{j_2})<s$.

\begin{lemma}
\label{lem:tree-s}
Let $G$ be a tree and  $\mathbf{X} := (X_1,\ldots,X_k)$ a family of closed connected subsets of $G$, for some integer $k\geq 1$.
If $s>0$, then for some $j^*\in[k]$, the piece $X_{j^*}$ is $s$-good.
\end{lemma}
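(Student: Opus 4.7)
The plan is to reduce the $s>0$ case to the already-established $0$-good case (Lemma~\ref{lem:tree-0}) by thickening each piece by an open neighborhood of radius $s/2$. Concretely, I would define, for each $j\in[k]$, the set
\[
\tilde X_j := \{\,y\in G : \distg(y, X_j) < s/2\,\},
\]
and then apply Lemma~\ref{lem:tree-0} to the family $\tilde{\mathbf{X}} := (\tilde X_1,\ldots,\tilde X_k)$ of connected subsets of the tree $G$.

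First, I would verify that each $\tilde X_j$ is connected: since $G$ is a tree, for any $x\in G$ and radius $\rho>0$ the open ball $B(x,\rho)$ is path-connected (every point within distance $<\rho$ of $x$ is joined to $x$ by a path that stays inside the ball). Thus $\tilde X_j = \bigcup_{x\in X_j} B(x,s/2)$ is a union of connected sets all meeting the connected set $X_j$, hence connected.

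Next, I would establish the key equivalence: for any $j_1,j_2\in[k]$,
\[
\distg(X_{j_1},X_{j_2})<s \quad\Longleftrightarrow\quad \tilde X_{j_1}\cap \tilde X_{j_2}\neq\emptyset.
\]
For the forward direction, since the $X_j$ are closed in the tree, one can find $a\in X_{j_1}, b\in X_{j_2}$ with $\distg(a,b)<s$; the midpoint $y$ of $\pathg(a,b)$ then satisfies $\distg(y,a)=\distg(y,b)<s/2$, so $y\in \tilde X_{j_1}\cap \tilde X_{j_2}$. Conversely, any $y$ in the intersection admits $a\in X_{j_1}, b\in X_{j_2}$ with $\distg(y,a),\distg(y,b)<s/2$, and the triangle inequality gives $\distg(a,b)<s$.

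With the equivalence in hand, I would invoke Lemma~\ref{lem:tree-0} applied to $\tilde{\mathbf{X}}$ to obtain some $j^*\in[k]$ such that $\tilde X_{j^*}$ is $0$-good with respect to $\tilde{\mathbf{X}}$. Translating back via the equivalence: whenever $\distg(X_{j_1},X_{j^*})<s$ and $\distg(X_{j_2},X_{j^*})<s$, the sets $\tilde X_{j_1}$ and $\tilde X_{j_2}$ both meet $\tilde X_{j^*}$, hence they meet each other, and therefore $\distg(X_{j_1},X_{j_2})<s$. This is exactly the $s$-good property for $X_{j^*}$, completing the proof. The only delicate point is ensuring the translation between ``intersect'' and ``within distance $<s$'' is strict-inequality-preserving; using open neighborhoods of radius $s/2$ (rather than closed) handles this cleanly, and the fact that Lemma~\ref{lem:tree-0} is stated for arbitrary—not necessarily closed—connected pieces is precisely what makes the reduction go through.
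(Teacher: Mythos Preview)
Your proof is correct and, in fact, considerably slicker than the paper's. The paper does \emph{not} reduce to Lemma~\ref{lem:tree-0}; instead it fixes an arbitrary root $r$, lets $x_j := \med(X_j,r)$ be the unique nearest point of $X_j$ to $r$ (supplied by the real-tree Lemmas~\ref{lem:unique-nearest} and~\ref{lem:unique-nearest-subset}), and picks $X_{j^*}$ to be a piece maximizing $\distg(x_j,r)$. It then proves through three auxiliary claims that this farthest piece is $s$-good, by analyzing how the paths $x_0\to r$ and $y_j\to r$ interact.

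Your reduction via open $s/2$-neighborhoods is a genuinely different route. The key ingredients you exploit---that open balls in a real tree are connected, that the midpoint of $\pathg(a,b)$ witnesses the intersection of the thickened sets, and crucially that Lemma~\ref{lem:tree-0} is stated for \emph{arbitrary} (not necessarily closed) connected subsets---are all sound, and the strict-inequality bookkeeping works exactly because you use open rather than closed neighborhoods. What your approach buys is brevity and a clean conceptual unification: the $s>0$ case becomes a corollary of the $s=0$ case. What the paper's approach buys is independence from Lemma~\ref{lem:tree-0} and an explicit, root-based identification of the $s$-good piece, together with the real-tree machinery (the $\med$ operator) that the authors flag as being of independent interest.
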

\begin{proof}
Fix an arbitrary point $r\in G$ as the tree \emph{root}. 
For every $j\in[k]$, let $x_j := \med(X_j,r)$ and $d_j := \distg(x_j,r)$.
Let $j^*\in \arg\max_{j\in[k]} d_j$, 
so that $X_{j^*}$ is a piece in $\mathbf{X}$ farthest from $r$; we abuse notation slightly and refer to this piece as $X_0$.
We claim that $X_0$ is $s$-good.
To prove this, we need several auxiliary claims on $X_0$.

\begin{claim}
\label{claim:intersect}
For each $j\in [k]$, if $X_0$ intersects $X_j$, then $x_0\in X_j$.
\end{claim}

\begin{proof}
Take any point $y\in X_0\cap X_j$. By Lemma \ref{lem:unique-nearest}, $\pathg(y,r)$ passes through both $x_0$ and $x_j$. By the selection of $x_0$, the path passes through $x_0$ before it passes through $x_j$, as in the illustration below:
\begin{center}
\begin{tikzpicture}[scale=0.8]
\draw[line cap=round, gray!40, line width=5mm] (1.2,1.4)--(3,1.9)--(4.5,1.9);
\draw[line cap=round, gray!40, line width=5mm] (1.2,2.6)--(3,2.1)--(6,2.1);
\node (r) at (7.5,2) {$r$};
\node (xj) at (6,2) {$x_j$};
\node (x0) at (4.5,2) {$x_0$};
\node (y) at (3,2) {$y$};
\node (Xj) at (1.2,2.5) {};
\node (Xjtext) at (3.5,2.7) {$X_j$};
\node (X0) at (1.2,1.5) {};
\node (X0text) at (3.5,1.2) {$X_0$};
\draw (X0) -- (y) -- (x0) -- (xj) -- (r);
\draw (Xj) -- (y);
\end{tikzpicture}
\end{center}
\vspace{-2mm}
It follows that $x_0\in \pathg(y, x_j) \subseteq X_j$. 
\end{proof}

\begin{claim}
\label{claim:not-intersect}
For each $j\in [k]$, if $X_0$ does not intersect $X_j$, then $x_0 =  \med(X_0,X_j)$.
\end{claim}

\begin{proof}
Let $y_0 := \med(X_0,X_j)$ and suppose for contradiction that $x_0\neq y_0$. Then there are two different paths from $x_0$ to $r$, as illustrated below,
where $y_j := \med(X_j,X_0)$:

\begin{center}
\begin{tikzpicture}[scale=0.8]
\draw[line cap=round, gray!40, line width=5mm] (-3,0)--(3,0);
\draw[line cap=round, gray!40, line width=5mm] (4,0)--(4,3);

\node (r) at (0,3) {$r$};

\node (x0) at (0,0) {$x_0$};
\node (X0L) at (-3,0) {};
\node (X0R) at (3,0) {};
\node (X0)  at (0,-0.6) {$X_0$};
\draw (x0) -- (r);

\node (xj) at (4,2) {$x_j$};
\node (XjL) at (4,0) {};
\node (XjR) at (4,3) {};
\node (Xj)  at (4.65,1.5) {$X_j$};
\draw (xj) -- (r);

\node (y0) at (2,0) {$y_0$};
\node (yj) at (4,1) {$y_j$};
\draw (x0) -- (y0) -- (yj) -- (xj) -- (r);

\draw (X0L) -- (x0) -- (y0) -- (X0R);
\draw (XjL) -- (yj) -- (xj) -- (XjR);

\end{tikzpicture}
\end{center}
The path $x_0 \to r$ intersects $X_0$ only at $x_0$.
The path $x_0\to y_0$ is contained in $X_0$;
$y_0\to y_j$ is essentially-disjoint from both $X_0$ and $X_j$; $y_j\to x_j$ is contained in $X_j$ (it may be empty); and $x_j\to r$ is essentially-disjoint from $X_j$. By the selection of $x_0$, $\distg(x_j,r)\leq \distg(x_0,r)$, so this last part $x_j\to r$ is essentially-disjoint from $X_0$ too. Therefore, these are two different paths from $x_0$ to $r$, contradicting that $G$ is a tree.
\end{proof}

For every $j\in[k]$ such that $X_0\cap X_j=\emptyset$, let $y_j := \med(X_j, X_0) =  \med(X_j, x_0)  $ and $z_j := \med( \pathg(x_0, y_j), r)$, i.e., $z_j$ is the point at which the path from $x_0$ to $r$ meets the path from $y_j$ to $r$. Our next auxiliary claim is (see also the figures in the proof):

\begin{claim}
\label{claim:distance-ineq}
For each $j\in[k]$, $\distg(y_j,z_j) \leq \distg(x_0,z_j)$.
\end{claim}

\begin{proof}
Consider first the case that $y_j = x_j$ (see the illustration below).
We have 
\[\distg(x_j,r) = \distg(x_j,z_j)+\distg(z_j,r)\]
and 
\[\distg(x_0,r)=\distg(x_0,z_j)+\distg(z_j,r).\]
Recall that $x_0$ was selected so that $\distg(x_0,r)\ge \distg(x_j,r)$. Subtracting the common term $\distg(z_j,r)$ gives $\distg(x_0,z_j) \ge \distg(y_j,z_j) = \distg(x_j,z_j)$. 

\begin{center}
\begin{tikzpicture}[scale=0.8]
\draw[line cap=round, gray!40, line width=5mm] (-6,2)--(-3,2);
\draw[line cap=round, gray!40, line width=5mm] (1.5,2)--(4,2);
\node (r) at (0,3.5) {$r$};
\node (zj) at (0,2) {$z_j$};
\node (x0) at (-3,2) {$x_0$};
\node (X0) at (-4.5,1.4) {$X_0$};
\node (xj) at (2,2) {$x_j=y_j$};
\node (Xj) at (3,1.4) {$X_j$};
\draw (-6,2) -- (x0) -- (zj)  -- (r);
\draw (4,2) -- (xj) -- (zj);
\end{tikzpicture}
\end{center}

Next, consider the case that $y_j\neq x_j$. 
Since $y_j=\med(X_j,x_0)$, it is contained in the path $x_0\to x_j$.
On the other hand, since $x_j=\med(X_j,r)$, it is contained in the path $y_j\to r$. Therefore, the path from $x_0$ to $r$ passes through both $y_j$ and $x_j$, as in the figure below:

\begin{center}
\begin{tikzpicture}[scale=0.8]
\draw[line cap=round, gray!40, line width=5mm] (-4,1)--(0,1);
\draw[line cap=round, gray!40, line width=5mm] (0,3)--(0,2)--(4,2);
\node (r) at (0,4) {$r$};
\node (xj) at (0,3) {$x_j$};
\node (yj) at (0,2) {$y_j$};
\node (x0) at (0,1) {$x_0$};
\node (Xj) at (2,2.6) {$X_j$};
\node (X0) at (-2,0.4) {$X_0$};
\draw (-4,1) -- (x0) -- (yj) -- (xj)  -- (r);
\draw (4,2) -- (yj);
\end{tikzpicture}
\end{center}

This implies that $z_j = \med(\pathg(x_0,y_j),r)=y_j$. Hence, $\distg(y_j,z_j) = 0 \leq \distg(x_0,z_j)$.
\end{proof}

By Claims~\ref{claim:intersect} and \ref{claim:not-intersect} and the definition of $y_j$, we get the following useful formula for the distance between $X_0$ and $X_j$, for every $j\in[k]$:
\begin{equation}
\label{eq:distance}
\distg(X_0,X_j) = \distg(x_0,X_j) = \distg(x_0,y_j),
\end{equation}
where the second equality holds if $X_0$ and $X_j$ are disjoint.
In other words, to measure the distance between the sets $X_0$ and $X_j$, we can consider the distance between the single point $x_0\in X_0$ (the point closest to $r$) and $X_j$.
If $X_0\cap X_j = \emptyset$, so that $y_j$ is defined,
then we can consider the distance between $x_0$ and the single point $y_j\in X_j$ (the point closest to $X_0$).

We are finally ready to prove that $X_0$ is $s$-good.
Consider two arbitrary pieces of $\mathbf{X}$, say $X_1$ and $X_2$, such that $\distg(X_1,X_0)<s$ and $\distg(X_2,X_0)<s$.
We have to prove that 
$\distg(X_1,X_2)<s$.

If $X_0 \cap X_1\neq \emptyset$, then $x_0\in X_1$ by Claim~\ref{claim:intersect}, so
\begin{align*}
\distg(X_1,X_2) &\leq \distg(x_0,X_2) \\
&= \distg(X_0,X_2)<s,
\end{align*}
where the equality holds by (\ref{eq:distance}).
An analogous claim holds if $X_0 \cap X_2\neq \emptyset$.
If $X_1\cap X_2\neq\emptyset$, then obviously $\distg(X_1,X_2) = 0 < s$.
So from now on suppose that $X_0 \cap X_1  = X_0\cap X_2 = X_1\cap X_2 = \emptyset$.
Then $y_1$ and $y_2$ are defined, and by (\ref{eq:distance}) we have $\distg(y_1,x_0)<s$ and $\distg(y_2,x_0)<s$.

By definition of $z_1$ and $z_2$, 
the path $x_0\to r$ must pass through both $z_1$ and $z_2$.
Without loss of generality, suppose that $z_1$ comes no later than $z_2$ on this path, so $\distg(x_0,z_1)\leq \distg(x_0,z_2)$, as in the illustration below:
\begin{center}
\begin{tikzpicture}[scale=0.8]
\node (x0) at (0,0) {$x_0$};
\node (x1) at (1,-1) {$y_1$};
\node (z1) at (2,0) {$z_1$};
\node (x2) at (3,1) {$y_2$};
\node (z2) at (4,0) {$z_2$};
\node (r) at (6,0) {$r$};
\draw (x0) -- (z1) -- (z2) -- (r);
\draw (x1) -- (z1);
\draw (x2) -- (z2);
\end{tikzpicture}
\end{center}
Consider the path $y_1 \to z_1 \to z_2 \to y_2$. The length of this path is at most
\begin{align*}
&\distg(y_1,z_1)+
\distg(z_1,z_2)+
\distg(z_2,y_2)
\\
&\leq 
\distg(x_0,z_1)+
\distg(z_1,z_2)+
\distg(z_2,y_2)
\\
&= 
\distg(x_0,y_2) \\
&= 
\distg(X_0,X_2)
< s,
\end{align*}
where the first inequality holds by Claim~\ref{claim:distance-ineq} and the last equality by (\ref{eq:distance}).

We have demonstrated a path of length shorter than $s$ from a point $y_1\in X_1$ to a point $y_2\in X_2$;
this proves that $\distg(X_1,X_2)<s$.
It follows that $X_0$ is $s$-good.
\end{proof}

With Lemma~\ref{lem:tree-s} in hand, we can now establish the existence of a maximin allocation for forests by using similar arguments as in Theorem~\ref{thm:forest-0}.
In particular, when we allocate an $s$-good part $T_{i,j}$ to agent $i$, we remove all portions of the tree that are within distance $s$ of $T_{i,j}$, and divide the remaining cake recursively among the remaining agents.

\begin{theorem}
\label{thm:forest-s}
Let $G$ be a forest and $s > 0$.
For agents with arbitrary monotone valuations, 
a maximin allocation exists.
\end{theorem}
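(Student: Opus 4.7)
The strategy is to induct on $n$, running the same recursive construction as in the proof of Theorem~\ref{thm:forest-0} but invoking Lemma~\ref{lem:tree-s} in place of Lemma~\ref{lem:tree-0}. The base case $n=1$ is trivial: give the whole forest to the single agent. For $n \geq 2$, fix an $s$-separated maximin partition $\{P_{i,1}, \ldots, P_{i,n}\}$ for each agent $i$; because $s>0$, we may assume all pieces are closed, as noted in Section~\ref{sec:positive-separation}. Select any tree component $T \subseteq G$ that contains at least one such part, and collect into a single family $\mathbf{X}$ all maximin parts (across all agents) that are contained in $T$. Lemma~\ref{lem:tree-s} applied to the tree $T$ and family $\mathbf{X}$ supplies an $s$-good piece $P_{i^*,j^*} \in \mathbf{X}$; allocate this piece to agent $i^*$ and form the residual cake
\[
G' \ := \ \{x \in G : \distg(x, P_{i^*,j^*}) \geq s\}.
\]
Deleting a connected set together with its open $s$-neighborhood from a forest leaves a forest, so we can attempt to recurse on $G'$.

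\textbf{Key claim driving the induction.} For every remaining agent $i \neq i^*$, I would show that at least $n-1$ of her original maximin parts survive intact in $G'$. Parts of $i$ living in components of $G$ other than $T$ are at infinite distance from $P_{i^*,j^*}$ and so lie entirely in $G'$. For the parts inside $T$, argue by contradiction: if two of her parts $P_{i,\ell_1}$ and $P_{i,\ell_2}$ both satisfied $\distg(P_{i,\ell_m}, P_{i^*,j^*}) < s$, then $s$-goodness of $P_{i^*,j^*}$ would force $\distg(P_{i,\ell_1}, P_{i,\ell_2}) < s$, contradicting the $s$-separation of $i$'s own maximin partition. Hence at most one part of $i$ can meet the open $s$-neighborhood of $P_{i^*,j^*}$, and the remaining $n-1$ parts lie entirely inside $G'$, remain connected, and are still pairwise $s$-separated as subsets of $G'$.

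\textbf{Closing the recursion.} These $n-1$ surviving parts witness that agent $i$'s maximin share in $G'$ under $(n-1)$-agent, $s$-separated partitions is at least her original $\mms^{n,s}_i$ in $G$. Applying the inductive hypothesis to the $n-1$ remaining agents on the forest $G'$ then yields an allocation in which each of them receives a connected piece of $G'$ worth at least $\mms^{n,s}_i$; combined with $P_{i^*,j^*}$ being given to agent $i^*$ (worth $\mms^{n,s}_{i^*}$ by construction of her maximin partition), this produces the desired maximin allocation of $G$.

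\textbf{Main obstacle.} The heavy lifting is done by Lemma~\ref{lem:tree-s}; once it is in hand, the only substantive bookkeeping left is the short contradiction argument converting ``$s$-good'' into ``at most one harmed part per other agent,'' together with the routine verification that the surviving parts stay connected and pairwise $s$-separated inside $G'$. Thus the real conceptual difficulty was already overcome in the proof of Lemma~\ref{lem:tree-s}, and the present theorem is essentially a careful reduction to that lemma, exactly paralleling how Theorem~\ref{thm:forest-0} was a reduction to Lemma~\ref{lem:tree-0}.
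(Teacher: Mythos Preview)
Your overall strategy is exactly the paper's: reduce to Lemma~\ref{lem:tree-s} just as Theorem~\ref{thm:forest-0} reduces to Lemma~\ref{lem:tree-0}, allocate an $s$-good piece, delete its $s$-neighbourhood, and recurse. The ``key claim'' paragraph is correct and is the heart of the argument.

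There is, however, a genuine gap in how you close the recursion. You apply the inductive hypothesis to $G'$ as a fresh instance, obtaining an allocation that is $s$-separated with respect to $\distg[\,']$, the intrinsic metric of $G'$. But $\distg[\,']$ can be strictly larger than $\distg$: in a tree, two points of $G'$ whose unique $G$-path passes through the deleted neighbourhood lie in different components of $G'$ (distance $\infty$ there) yet may be closer than $s$ in $G$. Concretely, take $G$ to be the path $[0,10]$ with a branch of length $2$ from the point $5$ to a leaf $5'$, set $s=3$ and $P_{i^*,j^*}=\{5'\}$; then $G'=[0,4]\cup[6,10]$, and the points $4$ and $6$ are at infinite distance in $G'$ but at distance $2<s$ in $G$. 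A black-box maximin allocation on $G'$ could place two agents' pieces at $4$ and $6$, giving an allocation that is $s$-separated in $G'$ but not in $G$.

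The fix is simple and is what the paper's reference to Theorem~\ref{thm:forest-0} implicitly intends: do not recompute maximin shares in $G'$. Instead carry the \emph{original} surviving parts through the recursion and always measure distances in $G$. At each step the surviving parts are closed connected subsets of some tree $T$ of the original forest, so Lemma~\ref{lem:tree-s} (applied to $T$ with $\distg$) still yields an $s$-good piece; allocating it and deleting its $\distg$-neighbourhood harms at most one surviving part per agent by your key claim, and every newly allocated piece is automatically at $\distg$-distance $\geq s$ from all previously allocated ones because it lies outside their deleted neighbourhoods. This gives each agent one of her original maximin parts and guarantees $s$-separation in $G$. Equivalently, you can keep the inductive framing by strengthening the hypothesis to ``given $n$ agents each with $n$ pairwise $s$-separated closed connected pieces in $G$, one can assign each agent one of her pieces so that the assigned pieces are pairwise $s$-separated in $G$''; then the recursion never changes the ambient metric.
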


\subsection{Cutting General Graphs}

We now proceed to general graphs.
Even when the graph is a simple cycle, \citet{ElkindSeSu21} showed that a maximin allocation does not necessarily exist, but the $1$-out-of-$(n+1)$ maximin share can be guaranteed.
We present here a more general theorem analogous to Theorem~\ref{thm:graph-0}.
However, our theorem requires the assumption that the length of each edge is at least $s$.
We remark that the separation parameter $s$ is generally small in our motivating applications such as transition or buffer zones, so this assumption is realistic.
Nevertheless, it is an interesting question whether the result continues to hold without this assumption.

\begin{theorem}
\label{thm:graph-s}
Let $s > 0$.
Let $G$ be a graph in which the length of each edge is at least $s$. 
For any $n$ agents with arbitrary monotone valuations,
there exists an $s$-separated allocation in which each agent $i$ receives a connected piece with value at least $\emph{MMS}_i^{n+\fvsnum(G)}$.
\end{theorem}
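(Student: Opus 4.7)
The plan is to mimic the proof of Theorem~\ref{thm:graph-0}, reducing the general-graph case with separation to the forest case handled by Theorem~\ref{thm:forest-s}. Writing $r:=\fvsnum(G)$, I would fix, for every agent $i$, a $1$-out-of-$(n+r)$ $s$-separated maximin partition $\mathbf{P}^i=(P^i_1,\dots,P^i_{n+r})$ of $G$, so that $v_i(P^i_j)\geq \mms^{n+r}_i$ for every $j$. I would then pick a feedback vertex set $V^*\subseteq V$ of size $r$ and delete its vertices from $G$, keeping each adjacent edge as an open interval at the deleted endpoint; equivalently, attaching a fresh leaf at each dangling endpoint turns the result into a bona fide forest $G'$ that preserves the $\distg$ distances between points of $G\setminus V^*$.

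The key combinatorial observation is that at least $n$ pieces of each agent's partition survive the deletion. Since $s>0$ and each $P^i_j$ is closed, $s$-separation forces the pieces to be pairwise disjoint, so every deleted vertex lies in at most one piece per agent, and consequently at least $(n+r)-r=n$ pieces of $\mathbf{P}^i$ contain no deleted vertex. These intact pieces remain closed connected subsets of $G'$, and because removing vertices can only lengthen shortest paths, they stay pairwise $s$-separated in $G'$. They therefore form a valid $s$-separated $n$-partition of $G'$, each part of value at least $\mms^{n+r}_i$ (as measured in $G$), witnessing that the $1$-out-of-$n$ maximin share on $G'$ dominates $\mms^{n+r}_i$ on $G$. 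Invoking Theorem~\ref{thm:forest-s} on $G'$ then produces an $s$-separated allocation in which every agent $i$ receives a connected piece of value at least $\mms^{n+r}_i$; this allocation transports back to $G$ unchanged since $\distg$-distances in $G$ are no larger than those in $G'$.

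The main obstacle I anticipate is making this reduction rigorous in the presence of the half-open edges produced by vertex deletion. One has to check that the truncated object genuinely falls under the scope of Theorem~\ref{thm:forest-s}, that the surviving closed pieces still behave as closed connected subsets of $G'$, and that $s$-separation translates cleanly between $G$ and $G'$. This is precisely where the hypothesis that every edge has length at least $s$ is used: it guarantees that each truncated edge is still at least $s$ long, so the leaf-attachment construction yields a forest with no degenerately short edges, and Theorem~\ref{thm:forest-s} applies without edge cases where an $s$-good piece and its neighborhood could straddle an entire shortened edge.
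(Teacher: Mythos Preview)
There is a genuine gap in the transport step. You correctly note that deleting vertices can only \emph{increase} shortest-path distances, so $\distg(A,B)\le\textsc{Dist}^{G'}(A,B)$ for any $A,B\subseteq G\setminus V^*$. But this inequality points the wrong way for your conclusion: an allocation that is $s$-separated in $G'$ need not be $s$-separated in $G$, precisely because $G$ may offer shortcuts through the restored feedback vertices. Concretely, Theorem~\ref{thm:forest-s} may hand agent~$1$ a piece on one edge incident to some $u\in V^*$ and agent~$2$ a piece on a different edge incident to $u$, each reaching arbitrarily close to the open endpoint where $u$ was. In $G'$ these pieces can be in different components (distance $\infty$), yet in $G$ their distance through $u$ is arbitrarily small. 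Nothing in your reduction prevents this; the surviving maximin parts of \emph{one} agent are $s$-separated in $G$, but parts coming from \emph{different} agents' partitions carry no such guarantee.

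The paper's proof closes this gap by removing not just the feedback vertices but an open interval of length $s/2$ from every edge adjacent to each feedback vertex. The set removed around a single vertex $u_j$ then has diameter less than $s$, so it still meets at most one part of each agent's $s$-separated $(n+r)$-partition; hence $n$ parts per agent survive, as in your argument. Crucially, any path in $G$ between two allocated pieces that passes through a feedback vertex must traverse two such $s/2$-intervals and therefore has length at least $s$, so $s$-separation in the forest \emph{does} transfer back to $G$. This is also where the hypothesis that every edge has length at least $s$ is actually used---it ensures the $s/2$-cuts can be made---not, as you suggest, to avoid ``degenerately short edges'' in an application of Theorem~\ref{thm:forest-s}, which has no edge-length hypothesis.
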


\begin{proof}
Let $r := \fvsnum(G)$.
Let $u_1,\ldots, u_r$ be a set of vertices such that, when they are deleted from $G$, the remaining graph is a forest.
For each $j\in[r]$, remove an open interval of length $s/2$ from each edge adjacent to $u_j$.
Consider the $1$-out-of-$(n+r)$ maximin partition of each agent.
For each $j\in[r]$, the distance between each pair of points in the set removed due to the vertex $u_j$
is less than $s$. 
Hence, the removed set overlaps at most one part of each agent's partition. 
Therefore, once the graph becomes a forest, for every agent, at least $n$ parts remain.
By Theorem~\ref{thm:forest-s}, there exists an $s$-separated allocation in which every agent $i$ gets at least one of her maximin parts, and therefore value at least $\mms_i^{n+r}$.
Then, reconstruct the original graph by putting back the removed intervals of length $s/2$.
The allocation remains $s$-separated.
\end{proof}

As in the case $s=0$ (see Open Problem \ref{open:mms-rank}), we do not know whether the factor $n+\fvsnum(G)$ is tight in general.
Below, we present a class of graphs for which we can obtain a tight bound.
In particular, we consider the family of graphs 
such that the feedback vertex set number of each connected component is at most $1$ (that is, every connected component is either a tree, or can be made acyclic by removing a single vertex).
We denote $N := \min(n+\fvsnum(G),~ 2n - 1)$.

\begin{theorem}
\label{thm:forest-cycles-s}
The following statements hold for any real number $s > 0$ and integer $n\geq 1$: 

(a) Let $G$ be a vertex-disjoint union of graphs with FVS number $\leq 1$, such that the length of each edge is at least $s$. 
For any $n$ agents with monotone valuations, there is an allocation in which each agent $i$ receives value at least $\emph{MMS}_i^{N}$.

(b) For any integer $r\geq 1$, there exists a graph $G$ with $\fvsnum(G) = r$ (specifically, a union of $r$ cycles and zero or more trees), 
and $n$ agents with additive valuations, such that no allocation gives each agent $i$ at least $\emph{MMS}_i^{N-1}$.
\end{theorem}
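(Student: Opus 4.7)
Plan: I will prove the two parts separately.

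For part (a), the bound $\mms_i^{n+\fvsnum(G)}$ is immediate from Theorem~\ref{thm:graph-s}, so the task reduces to establishing $\mms_i^{2n-1}$, which is binding only when $\fvsnum(G) \geq n-1$. My strategy is an induction on $n$. Consider each agent's $1$-out-of-$(2n-1)$ maximin partition of $G$; its $2n-1$ parts are distributed among the connected components of $G$. In the inductive step, I plan to find an integer $k\geq 1$ and a matching of $k$ agent--component pairs $(i_1,C_{t_1}),\ldots,(i_k,C_{t_k})$ such that (i) each $i_j$ has at least one maximin part in $C_{t_j}$, so allocating the entire component $C_{t_j}$ to $i_j$ gives value $\geq \mms_{i_j}^{2n-1}(G)$; and (ii) each of the remaining $n-k$ agents has at most $2k$ of her $2n-1$ parts inside the matched components $C_{t_1}\cup\cdots\cup C_{t_k}$. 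After this allocation, each remaining agent has at least $(2n-1)-2k=2(n-k)-1$ intact parts on the leftover graph $G' := G\setminus(C_{t_1}\cup\cdots\cup C_{t_k})$, which is again a vertex-disjoint union of components with FVS number $\leq 1$. The inductive hypothesis applied to $(n-k,\,G')$ then yields an allocation of the remaining cake giving each surviving agent $j$ value at least $\mms_j^{2(n-k)-1}(G')\geq \mms_j^{2n-1}(G)$, where the inequality holds because $j$'s intact parts on $G'$ form a $1$-out-of-$(2n-2k-1)$ partition of $G'$ witnessing the required MMS.

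For part (b), I will construct explicit counter-examples, with the construction tailored to whether $N-1 = n+r-1$ (the case $r\leq n-1$) or $N-1 = 2n-2$ (the case $r\geq n$). In the first regime, my plan is to generalize the single-cycle counter-example of \citet{ElkindSeSu21} by taking $r$ disjoint scaled copies of the hard cycle (plus possibly auxiliary tree edges), with valuations chosen so that any candidate $1$-out-of-$(n+r-1)$ allocation is forced to reproduce the unsolvable single-cycle configuration on at least one copy. In the second regime, I plan to use $r$ disjoint cycles carrying symmetrically-placed value peaks so that each agent's $1$-out-of-$(2n-2)$ maximin partition must ``claim'' a peak in a rigid rotational pattern across the cycles, and then to argue that no $s$-separated allocation can realize all $n$ rigid patterns simultaneously.

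The main obstacle is proving existence of the matching in part (a): one must show that whenever $\fvsnum(G)\geq n$ a valid matching of some size $k\geq 1$ satisfying both (i) and (ii) exists. I expect this to follow from a Hall-type or double-counting argument, exploiting the facts that each agent's $2n-1$ parts are distributed over at least $n$ components (so the average number of parts per component is less than $2$ per agent) and that any component containing many parts of some agent is a valuable target for that agent. For part (b), the main technical hurdle is verifying that the constructed MMS values are positive and that every conceivable $s$-separated allocation fails to meet them; this will require an exhaustive case analysis on how allocated pieces can overlap the cycle peaks.
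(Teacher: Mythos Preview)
Your inductive scheme has a real gap: the matching you postulate need not exist. Take $n=2$ and let $G$ consist of two cycle components $C_1,C_2$ (so $\fvsnum(G)=2\geq n$), where both agents assign all value to $C_1$ and none to $C_2$. Then every part of each agent's $1$-out-of-$3$ maximin partition lies in $C_1$. For $k=1$ you must match some agent to $C_1$, but the other agent has all $3>2k$ parts there, violating (ii); for $k=2$ you would need two components each containing a part of its matched agent, but $C_2$ contains no parts at all, violating (i). So no $k\geq 1$ works. More generally, nothing prevents all agents from concentrating their parts in a single FVS-$1$ component, and your induction has no mechanism for handling that component by \emph{subdivision} rather than by wholesale allocation. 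The paper avoids this by a one-shot bipartite decomposition rather than induction: it builds the bipartite graph between agents and FVS-$1$ components (an edge when the agent has a part there), and invokes a structural theorem (see \citet{aigner2019envy}) that splits the agents into $X_L$, who admit a saturating matching to components $Y_L$ and each receive a whole component, and $X_S$, whose neighbourhood $Y_S$ satisfies $|Y_S|<|X_S|$. The $X_S$ agents are then handled by Theorem~\ref{thm:graph-s} on the subgraph $G'$ consisting of $Y_S$ together with all tree components; the key inequality $|X_S|+\fvsnum(G')=|X_S|+|Y_S|\leq 2|X_S|-1\leq 2n-1$ is exactly what delivers the $2n-1$ bound. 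Your double-counting intuition (``average fewer than $2$ parts per component'') is not enough, because it ignores that agents may simply not touch most components.

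\paragraph{Part (b).} Your plan is in the right spirit but the concrete constructions you sketch diverge from what the paper does and are not obviously correct. For $r\leq n-1$ the paper does \emph{not} take $r$ scaled copies of the hard cycle; it uses $r-1$ ``small'' cycles of length $2s+2\varepsilon$ (each contributing two parts per agent and at most one positively-served agent) together with one ``large'' cycle of length $(n+1-r)(s+\varepsilon)$ on which agents' valuable regions are staggered so that at most $n-r$ agents can be served. For $r\geq n$ the construction is simpler: $n-1$ valuable small cycles (and $r-(n-1)$ worthless ones), each serving at most one agent. Your ``$r$ scaled copies'' idea may be salvageable, but you would need to argue carefully why the $1$-out-of-$(n+r-1)$ maximin share is positive across $r$ copies while the per-copy capacity forces failure, and your proposal does not yet contain that argument.
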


\begin{proof}
We prove the two parts in turn.
For part (a), consider a $1$-out-of-$N$ maximin partition of each agent.
Define a bipartite graph $H = (X, Y, E)$, where $X$ is the set of agents, $Y$ is the set of connected components of $G$ with FVS number $1$, and a component $C$ is adjacent to agent~$i$ if and only if $C$ contains at least one part from $i$'s maximin partition. 

Every bipartite graph admits unique partitions $X = X_S\cup X_L$ and $Y = Y_S\cup Y_L$ such that the following holds
(see, e.g., Theorem 1.3 of \citet{aigner2019envy}):
\begin{center}
\includegraphics[width=6cm]{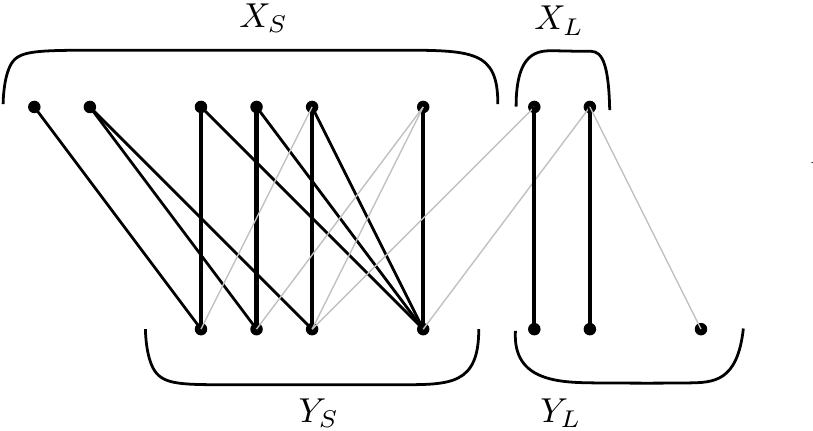}
\end{center}
\begin{enumerate}[(i)]
\item There are no edges between $X_S$ and $Y_L$;
\item The subgraph $G[X_S,Y_S]$ is ``$Y$-path-saturated'', which, for our purposes, just implies that $|X_S| > |Y_S|$;
\item The subgraph $G[X_L,Y_L]$ admits a matching that saturates all vertices of $X_L$.
\end{enumerate}
With this partition at hand, the cake is allocated as follows.

Every agent in $X_L$ receives an entire component from $Y_L$ according to the matching in (iii). 
It follows that every such agent receives a value of at least $\mms_i^{N}$.

Let $n' := |X_S|$ and $G'$ be the subgraph of $G$ containing the components in $Y_S$ along with all tree components of $G$. 
By (i), all parts in the maximin partition of every agent in $X_S$ are contained in $G'$.
By Theorem \ref{thm:graph-s}, 
$G'$ admits an allocation in which each agent $i\in X_S$ receives value at least $\mms_i^{n' + \fvsnum(G')}$.
Clearly, $n' + \fvsnum(G') \leq n + \fvsnum(G)$.
Since the FVS number of
each element of $Y_S$ equals $1$, we have
$\fvsnum(G') = |Y_S| \le n'-1$ by (ii), so 
$n' + \fvsnum(G') \leq 2 n' -1\leq 2 n -1$. Therefore, 
$n' + \fvsnum(G') \leq N$, 
so $\mms_i^{n' + \fvsnum(G')} \geq \mms_i^{N}$.
This completes the proof of part (a).

We now proceed to part (b).
Given integers $n\geq 1$ and $r\geq 1$, 
we denote $N = \min(n+r, 2n - 1)$.
We construct a graph $G$ made of a disjoint union of $r$ cycles, and valuation functions of $n$ agents such that for all agents~$i$, $\mms_i^{N-1}=1$, but every $s$-separated allocation gives a positive value to at most $n-1$ agents.

\underline{Case 1:} $r\ge n$.
In this case, we have $N-1 = 2n-2$.
All $r$ cycles are of length $2 s + 2\varepsilon$ for $0 < \varepsilon \ll s$. Some $n-1$ cycles are ``valuable'', i.e., each agent $i$ values every such cycle at $2$, with a valuation evenly concentrated in two regions of length $\varepsilon$ each: one at angle $i\pi/n$ and one at angle $i\pi /n + \pi$ (radians).
The other $r-(n-1)$ cycles have no value to any agent.
Every agent can partition every valuable cycle into two $s$-separated regions of value $1$, so $\mms_i^{2n-2} = 1$ for every agent $i\in [n]$.
But from every valuable cycle, a positive value can be allocated to at most one agent, so all in all, at most $n-1$ agents can get a positive value. 

\underline{Case 2:} $r\leq n-1$.
In this case, we have $N-1 = n+r-1$.
Some $r-1$ cycles are ``small'', with length $2 s + 2\varepsilon$ for $0< \varepsilon\ll s$, and the agents' valuations as in Case~1.
The $r$-th cycle is ``large'', with length
$(n+1-r)\cdot (s+\varepsilon)$.
Each agent $i$ values the large cycle at $n+1-r$, with a valuation concentrated in $n+1-r$ small $s$-separated regions of length $\varepsilon$ each. 
Therefore, every agent can partition the large cycle into $n+1-r$ regions of value $1$ which are $s$-separated. 
By adding two regions for each of the $r-1$ small cycles, we get $\mms_i^{n+r-1} = 1$ for all $i\in [n]$.

The valuable regions on the large cycle are arranged in such a way that at most $n-r$ agents can receive a positive value from the large cycle. 
In particular, for each $i\in[n-1]$, the regions of agent $i+1$ are shifted clockwise from those of agent~$i$ by a small amount, say $\varepsilon$.
This is illustrated in the figure below, where $r=1$, $n=4$, $n+1-r=4$, and the length of each side of the square is $s+\varepsilon$. 
Each color denotes the valuable regions of a single agent.

\begin{center}
\begin{tikzpicture}[scale=3]
\draw (0,0) -- (0,1.1) -- (1.1,1.1) -- (1.1,0) -- (0,0);  
\draw[line width=2mm, green] (0,0.1)--(0,0.167);
\draw[line width=2mm, blue] (0,0.2)--(0,0.267);
\draw[line width=2mm, red] (0,0.3)--(0,0.367);
\draw[line width=2mm, purple] (0,0.4)--(0,0.467);

\draw[line width=2mm, green] (0.1,1.1)--(0.167,1.1);
\draw[line width=2mm, blue] (0.2,1.1)--(0.267,1.1);
\draw[line width=2mm, red] (0.3,1.1)--(0.367,1.1);
\draw[line width=2mm, purple] (0.4,1.1)--(0.467,1.1);

\draw[line width=2mm, green] (1.1,1.0)--(1.1,0.933);
\draw[line width=2mm, blue] (1.1,0.9)--(1.1,0.833);
\draw[line width=2mm, red] (1.1,0.8)--(1.1,0.733);
\draw[line width=2mm, purple] (1.1,0.7)--(1.1,0.633);

\draw[line width=2mm, green] (1.0,0)--(0.933,0);
\draw[line width=2mm, blue] (0.9,0)--(0.833,0);
\draw[line width=2mm, red] (0.8,0)--(0.733,0);
\draw[line width=2mm, purple] (0.7,0)--(0.633,0);
\end{tikzpicture}
\end{center}

As at most one agent can receive a positive value from each of the $r-1$ small cycles, at most $(r-1)+(n-r) = n-1$ agents overall can receive a positive value.
This completes the proof of Case~2, and therefore the proof of part (b).

Note that the impossibility result can be extended to graphs with any number of trees (in addition to the $r$ cycles), by simply assuming that all trees have a value of 0 to all agents.
\end{proof}

Theorem~\ref{thm:forest-cycles-s} shows that Theorem~\ref{thm:graph-s} is tight for unions of trees and cycles when the number of cycles is less than $n$.
However, the exact factor for other graphs remains open.

\section{Discussion}
\label{sec:conclusion}

In this work, we have studied the division of a graphical cake using the maximin share notion, both with and without separation constraints.
Our most technically challenging result shows that a maximin allocation exists for positive separation whenever the graph is acyclic.
A tempting approach to simplify this proof is by using the ``last diminisher'' method, wherein each agent can trim a proposed piece as long as the remaining piece after trimming still yields value at least the agent's maximin share.
Indeed, this method was used by \citet{BouveretCeEl17} to establish the existence result for trees in the context of indivisible goods, and the algorithm of \citet{ElkindSeSu21} for interval cakes can also be seen as a version of last diminisher.
We remark here that the approach does \emph{not} work in our setting with separation.
Indeed, consider the subtree in Figure~\ref{fig:bad-tree}, where two of Alice's parts in her maximin partition are bold, while one of Bob's parts is dashed.
The last diminisher method may allocate Bob's part to him; however, when we take the separation requirement into account, we cannot allocate either of Alice's parts in its entirety. 
Note that the dashed piece is not $s$-good.
This example precisely demonstrates why, in the proof of Theorem~\ref{thm:forest-s}, we need the elaborate lemmas on real trees in order to guarantee the existence of an $s$-good piece.

\vspace{-0.5mm}
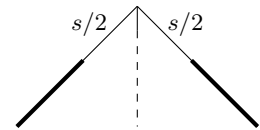
\begin{figure}[!ht]
\centering
\begin{tikzpicture}[scale=0.8]
\draw [dashed] (3,2.6) -- (3,1);
\draw (3,3) -- (3,2.6);
\draw (3,3) -- (5,1);
\draw (3,3) -- (1,1);
\draw [ultra thick] (1,1) -- (2.1,2.1);
\draw [ultra thick] (5,1) -- (3.9,2.1);
\node at (2.2,2.7) {\small $s/2$};
\node at (3.8,2.7) {\small $s/2$};
\end{tikzpicture}
\caption{An example subtree in which the last diminisher method fails to produce a maximin allocation.}
\label{fig:bad-tree}
\end{figure}
\vspace{-0.5mm}

More generally, our work builds upon an active line of research that incorporates realistic constraints in fair division problems.
We believe that identifying and studying such considerations will lead to technically intriguing questions as well as practically useful fairness guarantees.

\section*{Acknowledgments}

This work was partially supported by the European Research Council (ERC) under grant number 639945 (ACCORD), by the Israel Science Foundation under grant number 712/20, and by an NUS Start-up Grant.
We would like to thank the anonymous reviewers for their valuable comments.

\bibliographystyle{named}
\bibliography{main}

\end{document}